\documentclass[reprint, amsmath,amssymb, aps,pra,]{revtex4-1}

\usepackage{graphicx}
\usepackage{dcolumn}
\usepackage{bm}
\usepackage{hyperref}
\usepackage{tikz}
\usepackage{datetime2}

\newcommand\scalemath[2]{\scalebox{#1}{\mbox{\ensuremath{\displaystyle #2}}}}

\newtheorem{theorem}{Theorem}[section]
\newtheorem{lemma}[theorem]{Lemma}

\newenvironment{proof}[1][Proof]{\begin{trivlist}
\item[\hskip \labelsep {\bfseries #1}]}{\end{trivlist}}

\newcommand{\qed}{\nobreak \ifvmode \relax \else
      \ifdim\lastskip<1.5em \hskip-\lastskip
      \hskip0.5em plus0em minus0.5em \fi \nobreak
      \vrule height0.75em width0.5em depth0.25em\fi}

\begin{document}

\title{Non-Locality distillation in tripartite NLBs}
\thanks{Non-local boxes}%

\author{Talha Lateef}
\email{talha786@msn.com}

\date{2017-11-1}

\begin{abstract}
In quantum mechanics some spatially separated sub-systems behave as if they are part of a single system, the superposition of states of this single
system cannot be written as products of states of individual sub-systems,we say that the state of such system is \texttt{entangled}, such systems give rise to
non-local correlations between outcomes of measurements. The non-local correlations are conditional probability distributions of some measurement outcomes given
some measurement settings and cannot be explained by shared information\cite{Bell:111654,PhysRevLett.23.880}.These correlations can be studied using a \texttt{non-local box(NLB)} which can be viewed as a quantum system\footnote{we assume quantum correlations.}. 
A NLB is an abstract object which has number of inputs(measurement settings) and number of outputs(outcomes), such NLBs can be both quantum and super-quantum\cite{Cs,Popescu1994,PhysRevA.71.022101}.
The correlations are of use in quantum information theory, the stronger the correlations the more useful they are, hence we study protocols that have multiple weaker non-local
systems, application of these protocols to weaker systems may result in stronger non-local correlations, we call such protocols \texttt{non-locality distillation protocols}\cite{PhysRevLett.102.120401,PhysRevLett.102.160403,PhysRevA.80.062107}. In our work here we present non-locality distillation protocols for \texttt{\hyperref[INTRO]{tripartite NLBs}} specifically \texttt{\hyperref[GHZ]{GHZ box}} and \texttt{\hyperref[START]{class 44,45 and 46}}  of no-signalling polytope.
\end{abstract}

\pacs{Valid PACS appear here}
\maketitle

\section{\label{INTRO}Introduction:\protect}

The no-signaling box under consideration has three parties, two measurement settings, two measurement outcomes and is represented by the tuple $(3,2,2)$. The measurement settings/inputs are $x,y,z \in \{0,1\}$ and for measurement outcomes/outputs we use either $a,b,c \in \{0,1\}$ or the so called fourier representation $a,b,c \in \{1,-1\}$. The three parties are Alice with input $x$ and output $a$, Bob with input $y$ and output $b$ and Charlie with input $z$ and output $c$. The correlations are in the form of conditional probabilities $P(abc|xyz)$, these correlations satisfy three constraints of \texttt{positivity} , \texttt{normalization} and \texttt{no-signaling} respectively:

\begin{flalign*}   
&P(abc|xyz) \geq 0 \text{\ \ \ \ for all $a,b,c,x,y,z$}\\
&\sum_{a,b,c} P(abc|xyz)=1 \text{\ \ \ \ for all $x,y,z$}\\\\
&\text{and} \\\\
&\sum_{a}P(abc|xyz)=\sum_{a}P(abc|x^{\prime}yz) \text{\ \ \ \ for all $b,c,x,x^{\prime},y,z$}\\
&\sum_{b}P(abc|xyz)=\sum_{b}P(abc|xy^{\prime}z) \text{\ \ \ \ for all $a,c,x,y,y^{\prime},z$}\\
&\sum_{c}P(abc|xyz)=\sum_{c}P(abc|xyz^{\prime}) \text{\ \ \ \ for all $a,b,x,y,z,z^{\prime}$}\\
\end{flalign*}

Positivity and normalization comes from the correlations being conditional probabilities. No-signaling prevents faster than light communication between parties. For the case of non-local tripartite boxes they also can not be written in terms of shared information and correlations of individual parties so they do not satisfy:

\begin{equation*}
P(abc|xyz)=\sum_{\lambda}p(\lambda)P(a|x,\lambda)P(b|y,\lambda)P(c|z,\lambda)
\end{equation*}
where $p(\lambda)$ is probability distribution or shared information, we are using a probability distribution because the variable(s) may be beyond our control, from this it follows that  $p(\lambda) \geq 0$ and $\sum_{\lambda}p(\lambda) = 1$. The constraints of positivity, normalization and no-signaling is a description of a geometric object called \texttt{polytope},   it is possible to change one description of polytope to another, application of vertex enumeration algorithms to the constraints will result in all vertices or boxes. There are total 53856 boxes with 64 being local and rest of them non-local.These boxes are classified in 46 equivalent classes\cite{1751-8121-44-6-065303,PhysRevA.88.014102}.

\texttt{Non-locality distillation protocol} combines multiple non-local boxes to produce stronger non-locality using local operations without any classical communication. The \texttt{depth} of the protocol is the number of non-local boxes that are part of the protocol. In the case of depth $n$ protocol the three parties Alice , Bob and Charlie input three bits $x$,$y$ and $z$ to the protocol, the input to the first box is the same as the input to the protocol i.e. $x_{1}=x$ , $y_{1}=y$ and  $z_{1}=z$, the output bits from the first box are $a_{1}$ , $b_{1}$ and  $c_{1}$. The input to $n$th box are bits $x_{n}$ , $y_{n}$ and  $z_{n}$, which are result of each party using a local operation on their input bit and output bit from the $n-1$th box i.e.  $x_{n}=f_{n}(x_{n-1},a_{n-1})$ , $y_{n}=g_{n}(y_{n-1},b_{n-1})$ and $z_{n}=h_{n}(z_{n-1},c_{n-1})$, we get the output $a_{n}$ , $b_{n}$ and  $c_{n}$ from the $n$th box. The output of the protocol is then define as $a=f_{n+1}(x,a_{1},...,a_{n})$ , $b=g_{n+1}(y,b_{1},...,b_{n})$ and $c=h_{n+1}(z,c_{1},...,c_{n})$.

A protocol is said to be \texttt{non-adaptive} if all the inputs to the NLBs in the protocol depends on the original input to the protocol i.e. $x_{n}=f_{n}(x)$ , $y_{n}=g_{n}(y)$ and $z_{n}=h_{n}(z)$ for all $n$ while in the case of \texttt{adaptive} protocols inputs to NLBs also depend on outputs from the previous NLBs in the protocol which can be seen in the original definition, hence non-adaptive protocols are a subset of adaptive protocols\cite{PhysRevLett.102.120401,PhysRevA.80.062107,PhysRevA.82.042118}.

\begin{figure}[t]
  \centering
  \begin{tikzpicture}
    \draw (-4.3,-4.5) rectangle (4.3,4.5);    
   \node[inner sep=0pt] (russell) at (0,0)
    {\includegraphics[width=0.45\textwidth]{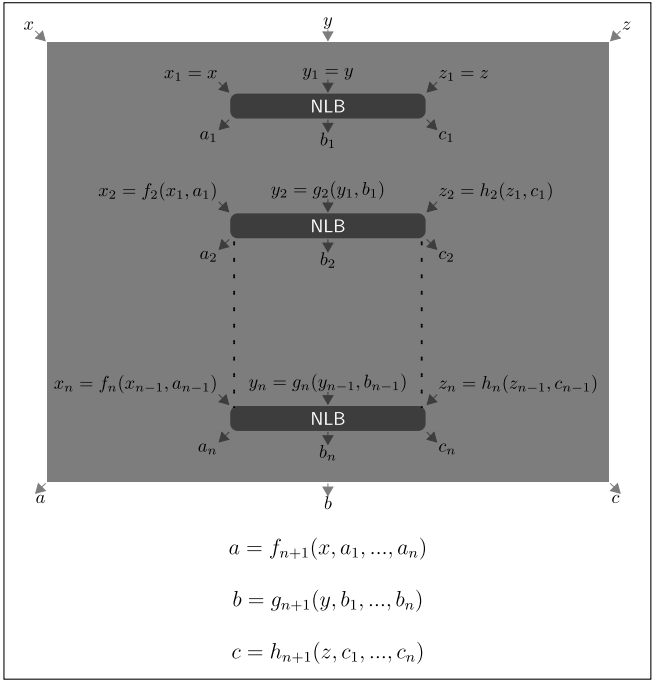}}; 
   \end{tikzpicture}  
  \caption{Distillation protocol of depth n}
  \label{fig:figure1}
\end{figure}

\section{\label{GHZ}GHZ box:\protect}
This NLB is based on GHZ correlations\cite{doi:10.1119/1.16243} and is defined as:

\begin{equation*}
P_{abc|xyz=000}=
	\begin{cases}
	\frac{1}{4} & \text{if } a \oplus b \oplus c=0\\ 
	0 & \text{if } a \oplus b \oplus c=1\\
	\end{cases}
\end{equation*}

\begin{equation*}
P_{abc|xyz\neq000}=
	\begin{cases}
	\frac{1}{4} & \text{if } a \oplus b \oplus c=1\\ 
	0 & \text{if } a \oplus b \oplus c=0\\
	\end{cases}
\end{equation*}\\

and in matrix form:
\begin{equation*}
P_{GHZ}=\left(\begin{array}{cccccccc}\frac{1}{4}&0&0&\frac{1}{4}&0&\frac{1}{4}&\frac{1}{4}&0\\0&\frac{1}{4}&\frac{1}{4}&0&\frac{1}{4}&0&0&\frac{1}{4}\\0&\frac{1}{4}&\frac{1}{4}&0&\frac{1}{4}&0&0&\frac{1}{4}\\0&\frac{1}{4}&\frac{1}{4}&0&\frac{1}{4}&0&0&\frac{1}{4}\end{array}\right)
\end{equation*}
The inputs are limited to even parity i.e $xyz\in \{000,011,101,110\}$ and outputs $a,b,c\in \{0,1\}$.

The GHZ box is not an extremal point of no-signaling polytope, however It is possible to construct GHZ boxes using boxes of class 46 as given in \cite{1751-8121-44-6-065303} by using:

\begin{equation*}
P_{GHZ}=\frac{1}{2}P_{46} + \frac{1}{2}P_{46}^{\prime}
\end{equation*}
where $P_{46}$ and $P_{46}^{\prime}$ are both different boxes of class 46. The GHZ box with even parity inputs can be constructed in similar way using boxes:

\begin{align*}
&P_{46}=\frac{1}{8}(1+abc(-1)^{xy+xz+yz})\\
\\&\text{and}\\\\
&P_{46}^{\prime}=\frac{1}{8}(1+abc(-1)^{x+y+z+xy+xz+yz})\\
&\ \ \ \ \ \ \ \ \ \text{where } a,b,c \in \{1,-1\}\\ 
\end{align*}
or just using $P_{44}$ where $P_{44}$ is the extremal point of class 44 given by:
\begin{align*}
P_{44}=\frac{1}{8}(1+abc(-1)&^{x+y+z+xy+xz+yz+xyz})\\
\text{where } a,b,c &\in \{1,-1\}\\\\ 
\end{align*}

\begin{figure}[t]
  \centering
  \begin{tikzpicture}
    \draw (-4,-2.0) rectangle (4,2.0);
        \node[inner sep=0pt] (russell) at (0,0)
    {\includegraphics[width=0.3\textwidth]{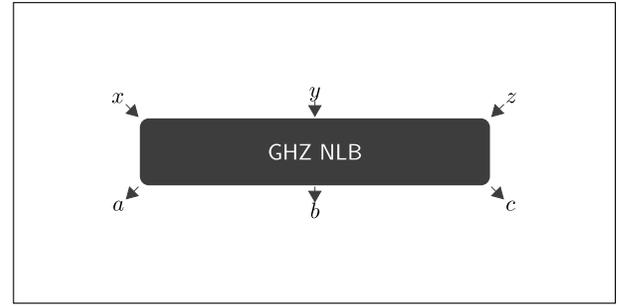}}; 
      \end{tikzpicture}  
  \caption{Noisy GHZ box}
  \label{fig:figure2}
\end{figure}

The local deterministic vertices of no-signalling polytope are grouped together as local polytope $P_{L}$ and has 64 vertices which is expressed as:

\begin{align*}
P_{L}^{\iota\kappa\mu\nu\sigma\tau}(abc|xyz)&=
\begin{cases}
1 & if\:a=\iota x\oplus\kappa,\:b=\mu y\oplus\nu,\:c=\sigma z\oplus\tau\\
0 & otherwise
\end{cases}\\
&\text{where } \iota,\kappa,\mu,\nu,\sigma,\tau \in \{0,1\}\\\\
\end{align*}

The $P_{L}$ results in 53856 inequalities which has been classified in 46 different classes\cite{PhysRevA.64.014102,quant-ph/0305190}, the amount of non-locality or box value for GHZ box is obtained using the bell inequalities of class 2 :
\begin{equation}
\label{ghzineq}
\scalemath{1.0}{
\normalsize{-2 \leq \left\langle A_{0} B_{0} C_{0} \right\rangle-\left\langle A_{0} B_{1} C_{1} \right\rangle-\left\langle A_{1} B_{0} C_{1} \right\rangle-\left\langle A_{1} B_{1} C_{0} \right\rangle \leq 2}}
\end{equation}

where
\begin{equation*}
\left\langle A_{x} B_{y} C_{z} \right\rangle=\sum_{a,b,c \in \{0,1\}}(-1)^{a \oplus b \oplus c}P(abc|xyz)
\end{equation*} 

\section{GHZ distillation:\protect}
We now present the noisy form of a GHZ box with parameters $\epsilon \in [1,-1]$ and $\delta \in [1,-1]$

\begin{equation*}
P_{abc|xyz=000}=
	\begin{cases}
	\frac{1+\epsilon}{8} & \text{if } a \oplus b \oplus c=0\\ 
	\frac{1-\epsilon}{8} & \text{if } a \oplus b \oplus c=1\\
	\end{cases}
\end{equation*}

\begin{equation*}
P_{abc|xyz\neq000}=
	\begin{cases}
	\frac{1+\delta}{8} & \text{$if$ } a \oplus b \oplus c=0\\ 
	\frac{1-\delta}{8} & \text{$if$ } a \oplus b \oplus c=1\\
	\end{cases}
\end{equation*}

\begin{figure}[t]
  \centering
  \begin{tikzpicture}
    \draw (-4.2,-4.5) rectangle (4.2,4.5);
    \node[inner sep=0pt] (russell) at (0,0)
    {\includegraphics[width=0.4\textwidth]{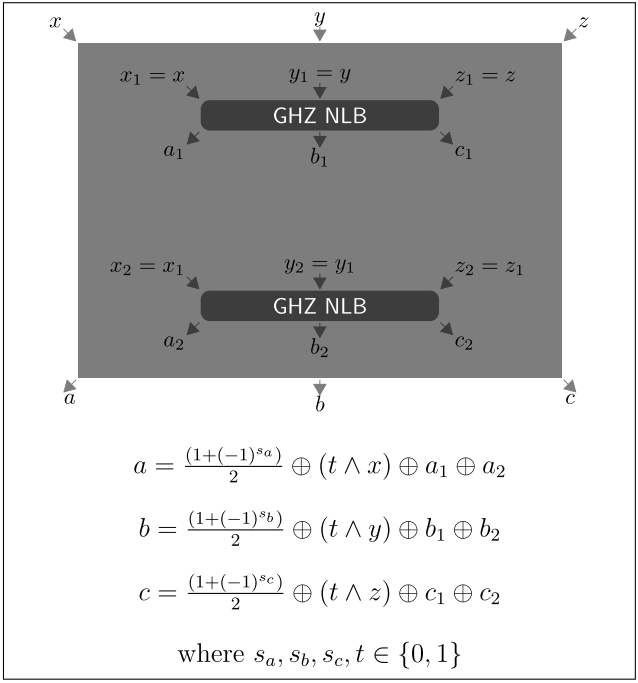}};     
  \end{tikzpicture}  
  \caption{General depth 2 distillation protocol}
  \label{fig:figure3}
\end{figure}
and in matrix form:
\begin{equation*}
{ \scalemath{0.8}{P_{GHZ(\epsilon,\delta)}}=\scalemath{0.8}{\frac{1}{8}}
\left(\scalemath{0.8}{\begin{array}{cccccccc} 1+\epsilon & 1-\epsilon & 1-\epsilon & 1+\epsilon & 1-\epsilon & 1+\epsilon & 1+\epsilon & 1-\epsilon\\ 
1+\delta & 1-\delta & 1-\delta & 1+\delta & 1-\delta & 1+\delta & 1+\delta & 1-\delta\\
1+\delta & 1-\delta & 1-\delta & 1+\delta & 1-\delta & 1+\delta & 1+\delta & 1-\delta\\
1+\delta & 1-\delta & 1-\delta & 1+\delta & 1-\delta & 1+\delta & 1+\delta & 1-\delta\\
\end{array}}\right)}
\end{equation*}
using inequality~\eqref{ghzineq} we get the single box value:
\begin{equation*}
V=\epsilon-3\delta
\end{equation*}
To be able to achieve distillation we want to attain $V^{\prime}$ such that:
\begin{equation*}
V^{\prime}>V
\end{equation*}

\begin{lemma}
In a depth n GHZ setting there is no adaptive distillation protocol that is more optimal than the non-adaptive distillation protocol.
\end{lemma}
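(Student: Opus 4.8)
The plan is to exploit the very special structure of the noisy GHZ box, namely that for a fixed input triple $(x,y,z)$ the only thing the box does is produce a uniformly random pair $(a,b)$ together with $c$ correlated to $a\oplus b$ through a biased coin (bias $\epsilon$ if $xyz=000$, bias $\delta$ otherwise). Concretely, I would first record the elementary fact that in the noisy GHZ box the marginal of each party's output is unbiased and independent of the inputs, and that the full correlation depends on the inputs only through the single bit $[xyz=000]$, i.e. only through $\langle A_xB_yC_z\rangle = \epsilon$ when $xyz=000$ and $-\delta$ otherwise (in the $0/1$ convention the parity bit carries a sign). This is the structural lemma that makes adaptivity useless.

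Next I would set up a generic depth-$n$ adaptive protocol as in the Introduction: local update functions $x_k=f_k(x_{k-1},a_{k-1})$, etc., and output functions $a=f_{n+1}(x,a_1,\dots,a_n)$ and so on. The target quantity is the distilled value $V'$ computed from $\langle A_{x}B_{y}C_{z}\rangle' = \sum_{abc}(-1)^{a\oplus b\oplus c}P'(abc|xyz)$ plugged into inequality~\eqref{ghzineq}. The key step is to expand $(-1)^{a\oplus b\oplus c}$ for the protocol output and push the expectation through the chain of boxes. Because each party's raw output bit from box $k$ is an unbiased coin (conditioned on everything prior), any monomial in the $a_k$'s, $b_k$'s, $c_k$'s that is not of the ``aligned'' form (the same set of box-indices appearing on all three sides, so that the GHZ parity constraint can act) has expectation zero; and the only surviving contributions factor into a product over boxes of the single-box correlators, each of which equals $\epsilon$ or $-\delta$ according to whether that box was fed $000$. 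Crucially, since the marginal output bits are unbiased and independent of inputs, the event ``box $k$ receives input $000$'' — although in an adaptive protocol it is determined by earlier outputs — is itself just a function of a collection of fair coins, so the probability weights multiplying the $\epsilon/\delta$ products are exactly reproducible by a non-adaptive choice (each party can simply ignore the previous output and hard-wire the same input distribution, or rather the same input, since the relevant marginal conditional probabilities do not depend on which value of the previous output occurred). I would then argue that for any adaptive protocol achieving a given $V'$ one can read off from this expansion a non-adaptive protocol achieving at least the same $V'$, by replacing each adaptive update rule with the deterministic rule that selects, at each stage and for each party, the input value that the adaptive rule would have produced — legitimate precisely because the outputs carry no information that changes the box statistics.

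I expect the main obstacle to be making the ``feeding forward carries no useful information'' argument fully rigorous rather than heuristic: one must show that conditioning on a prefix of outputs $(a_1,\dots,a_{k-1})$ does not change the joint distribution of the remaining boxes' outputs \emph{as a function of the inputs chosen}, which requires carefully tracking that the no-signaling/uniform-marginal property of $P_{GHZ(\epsilon,\delta)}$ is preserved under the wiring, and handling the output post-processing functions $f_{n+1},g_{n+1},h_{n+1}$ which may themselves depend on all intermediate outputs. The cleanest way around this, which I would adopt, is to work directly with the correlator expansion: write $\langle A_xB_yC_z\rangle'$ as an explicit sum over subsets $S\subseteq\{1,\dots,n\}$ of terms of the form $w_S^{xyz}\prod_{k\in S}(\text{value of box }k)$ where $w_S^{xyz}$ is a probability, show the adaptivity enters only through $w_S^{xyz}$, and then observe that any such family of weights $\{w_S^{xyz}\}$ realizable adaptively is also realizable non-adaptively because the weights are determined by products of unbiased-coin events. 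Finally I would plug the resulting expression for $V'$ into~\eqref{ghzineq} and conclude that the supremum of $V'$ over adaptive protocols equals that over non-adaptive protocols, which is the claim.
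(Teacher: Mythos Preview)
Your approach is genuinely different from the paper's, and you have overlooked the single structural feature on which the paper's argument actually rests: in this paper the GHZ box is \emph{only defined on even-parity input triples} $xyz\in\{000,011,101,110\}$. The paper's proof of the lemma is essentially a one-line domain argument: in the identity non-adaptive protocol every box receives the protocol input $(x,y,z)$, which has even parity by hypothesis and is therefore accepted; once any party's update rule departs from the identity, the input triple fed to some box lands in the odd-parity set $\{001,010,100,111\}$, which the GHZ box simply does not accept, so that box cannot contribute to $V'$. No correlator expansion, no Fourier analysis, no tracking of conditional distributions---the input-domain restriction is used to kill adaptivity outright.

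Your plan never invokes this restriction and instead tries to show that the unbiased single-party marginals render the fed-forward bits informationless, so that the correlator expansion of any adaptive protocol coincides with that of some non-adaptive one. This is a more ambitious and more general line (it would apply to a GHZ-type box defined on all eight input triples), but the step you yourself flag as the main obstacle is a real one and is not dispatched by what you wrote: the three output bits $(a_{k-1},b_{k-1},c_{k-1})$ are \emph{jointly} GHZ-correlated, not three independent fair coins, so adaptive rules $x_k=f_k(x_{k-1},a_{k-1})$, $y_k=g_k(y_{k-1},b_{k-1})$, $z_k=h_k(z_{k-1},c_{k-1})$ can manufacture a correlated input triple to box $k$ that no deterministic-per-party (or shared-randomness) non-adaptive rule can reproduce. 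Your assertion that ``the weights are determined by products of unbiased-coin events'' and are therefore non-adaptively realizable does not follow from uniform one-party marginals alone; you would still have to argue, at the level of the full correlator expansion, that these correlated inputs buy nothing for $V'$. That is substantially more work than the paper's domain argument, which sidesteps the difficulty entirely.
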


\begin{proof}
Consider a general depth $n$ distillation protocol(see Fig.~\ref{fig:figure1}) with GHZ NLBs, the inputs to the protocol $x$,$y$ and $z$ are restricted to $\{000,011,101,110\}$ by the definition of distillation protocol and GHZ box, now suppose that the protocol is non-adaptive i.e. all of the $x_k=f_k(x_{k-1},a_{k-1}) = x_{k-1}$ , $y_k=g_k(y_{k-1},b_{k-1}) = y_{k-1}$ and $z_k=h_k(z_{k-1},c_{k-1}) = z_{k-1}$ (where $k \leq n$) are true, in this case each box gets the same input as that of the protocol and hence every input contribute towards $V^{\prime}$, we now suppose an adaptive protocol i.e. atleast one of the $x_k=f_k(x_{k-1},a_{k-1}) \ne x_{k-1}$ , $y_k=g_k(y_{k-1},b_{k-1}) \ne y_{k-1}$ and $z_k=h_k(z_{k-1},c_{k-1}) \ne z_{k-1}$ is true, which results in the input being changed to one of the $\{001,010,100,111\}$ and therefore will not be accepted by the NLB and hence will not contribute towards $V^{\prime}$.\begin{flushright}$\qed$\end{flushright}
\end{proof}

We now consider depth 2 non-adaptive protocols, the only protocols that distill are parity protocols i.e the final output functions $f_{3}$,$g_{3}$ and $h_{3}$ are polynomials of degree at most $1$ over $GF(2)$/$\mathbb{Z}_{2}$($\lbrace 0,1 \rbrace\,\oplus,\wedge)$) of outputs and or input(protocol).

Since all distillation protocols are parity protocols we can generalize protocols in a single protocol diagram (see Fig.~\ref{fig:figure3}) by introducing further boolean variables $s_{a}$,$s_{b}$,$s_{c}$ and $t$.
The value attain by this protocol is then:
\begin{equation*}
V^{\prime}=(-1)^{1 \oplus s_{a} \oplus s_{b} \oplus s_{c}} \epsilon^2+(-1)^{s_{a} \oplus s_{b} \oplus s_{c}} 3\delta^2
\end{equation*}

The distillation plot is given in Fig.~\ref{fig:figure4} the light gray curved part in the plot highlights the region of distillation.

We now turn our attention to a specific non-adaptive depth $n$ distillation protocol $NDP_n(GHZ)$(see Fig.~\ref{fig:figure5}) this is the tripartite version of Forster et al.'s\cite{PhysRevLett.102.120401} parity protocol with GHZ boxes instead of PR boxes, just as we can represent an NLB in the matrix form, we can consider the protocol as a new NLB with it's inputs($x$,$y$,$z$) and outputs($a$,$b$,$c$) without worrying about it's internal working, this neat abstraction allows us to write matrix form of the protocol as an NLB.

\begin{figure}[t]
  \centering
  \begin{tikzpicture}
    \draw (-4.2,-3.0) rectangle (4.2,3.0);
     \node[inner sep=0pt] (russell) at (0,0)
    {\includegraphics[width=0.45\textwidth]{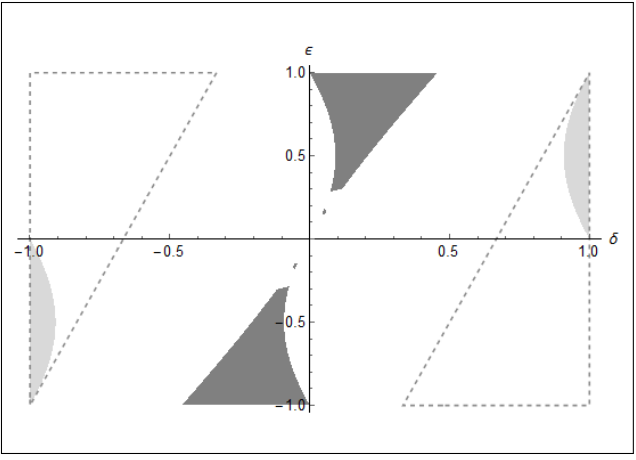}}; 
     \end{tikzpicture}  
  \caption{Distillation plot for depth 2 protocol with $V^{\prime}=\epsilon^{2}-3\delta^{2}$}
  \label{fig:figure4}
\end{figure}

\begin{lemma}
For the distillation protocol $NDP_n(GHZ)$ the matrix form is:
\begin{equation*}
\scalemath{0.75} {
P_{NDP_n(GHZ)(\epsilon,\delta)}=
			\frac{1}{8}
			\left({\begin{array}{cccccccc} 1+\epsilon^n & 1-\epsilon^n & 1-\epsilon^n & 1+\epsilon^n & 1-\epsilon^n & 1+\epsilon^n & 1+\epsilon^n & 1-\epsilon^n\\ 
			1+\delta^n & 1-\delta^n & 1-\delta^n & 1+\delta^n & 1-\delta^n & 1+\delta^n & 1+\delta^n & 1-\delta^n\\
			1+\delta^n & 1-\delta^n & 1-\delta^n & 1+\delta^n & 1-\delta^n & 1+\delta^n & 1+\delta^n & 1-\delta^n\\
			1+\delta^n & 1-\delta^n & 1-\delta^n & 1+\delta^n & 1-\delta^n & 1+\delta^n & 1+\delta^n & 1-\delta^n\\
			\end{array}}\right) \\	
}
\end{equation*}
with box value: 
\begin{equation*}
V^{\prime}_{NDP_n(GHZ)}=
\epsilon^{n}-3\delta^{n}
\end{equation*}
Let $V^{\prime}_n$ be the value for any depth $n$ non-adaptive GHZ protocol, then $V^{\prime}_n \leq V^{\prime}_{NDP_n(GHZ)}$ i.e. $V^{\prime}_n$ is bounded by $\pmb{max_{1 \leq k \leq n }\left|\epsilon^{k}-3\delta^{k}\right|}$
\end{lemma}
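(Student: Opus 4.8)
The plan splits into two parts: first pin down $P_{NDP_n(GHZ)(\epsilon,\delta)}$ and its box value, then bound the value of an arbitrary non-adaptive depth-$n$ GHZ protocol.

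For the matrix form I would argue directly from how $NDP_n(GHZ)$ is wired: each of the $n$ boxes receives the protocol input $(x,y,z)$ unchanged and the three parties form their outputs by XOR-ing the $n$ bits they each receive, so $a\oplus b\oplus c=\bigoplus_{k=1}^{n}d_k$ with $d_k:=a_k\oplus b_k\oplus c_k$. Since every GHZ box has uniform one- and two-party marginals, conditioned on $(d_1,\dots,d_n)$ the triple $(a_k,b_k,c_k)$ is uniform on the four strings of parity $d_k$, and adding the contributions of the other boxes (fixed, given the conditioning) keeps the result uniform; hence $P(abc\mid xyz)=\tfrac14\,P(a\oplus b\oplus c\mid xyz)$. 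The $d_k$ are independent with $\mathbb{E}[(-1)^{d_k}]$ equal to $\epsilon$ when $xyz=000$ and $\delta$ otherwise, so $\mathbb{E}[(-1)^{d_1\oplus\dots\oplus d_n}]=\prod_k\mathbb{E}[(-1)^{d_k}]$ equals $\epsilon^{n}$ resp.\ $\delta^{n}$. This reproduces the four rows $\tfrac18(1\pm\epsilon^{n})$ and $\tfrac18(1\pm\delta^{n})$, and feeding the correlators $\langle A_{0}B_{0}C_{0}\rangle=\epsilon^{n}$, $\langle A_{0}B_{1}C_{1}\rangle=\langle A_{1}B_{0}C_{1}\rangle=\langle A_{1}B_{1}C_{0}\rangle=\delta^{n}$ into~\eqref{ghzineq} gives $V^{\prime}_{NDP_n(GHZ)}=\epsilon^{n}-3\delta^{n}$.

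For the optimality bound I would use that every distilling depth-$n$ non-adaptive GHZ protocol is a parity protocol, so each party's output is a fixed $GF(2)$-affine form in its own input bit and the $n$ bits it receives, and box $k$ is fed $(f_k(x),g_k(y),h_k(z))$ with each of $f_k,g_k,h_k$ affine in one bit. Because the boxes are independent and the GHZ box has uniform one- and two-party marginals, the protocol's tripartite correlator $\langle A_xB_yC_z\rangle$ factors as an overall sign $\sigma\in\{\pm1\}$ (from the constant parts of the output forms) times a character value $\tau_i$ (from the input-linear parts, evaluated at the protocol input $e_i\in\{000,011,101,110\}$) times a product over the boxes of per-box expectations, each of which is $1$ (the box is untouched by this correlator), $0$ (only one or two of its outputs enter), or the box's own tripartite correlator $\epsilon$ or $\delta$ when all three enter. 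In the last case the box must see an even-parity input on every admissible protocol input, which forces $f_k,g_k,h_k$ to have equal linear parts with constants summing to $0$; such a box then contributes $\epsilon$ on exactly one of the four even protocol inputs and $\delta$ on the other three -- call this its \emph{type} $i\in\{0,1,2,3\}$ -- or, in the degenerate case of zero linear part, a constant factor $\epsilon$ or $\delta$ on all four inputs, which only shrinks $|V^{\prime}|$ and may be dropped. Writing $m\le n$ for the number of fully-used boxes and $m_i$ for how many have type $i$, one gets $\langle A\rangle_{e_i}=\sigma\,\tau_i\,\epsilon^{m_i}\delta^{m-m_i}$ with $(\tau_0,\tau_1,\tau_2,\tau_3)$ a character of $\mathbb{Z}_2^{2}$, so $V^{\prime}=\sigma\sum_{i}s_i\tau_i\,\epsilon^{m_i}\delta^{m-m_i}$ with $s=(+,-,-,-)$ the sign pattern of~\eqref{ghzineq}.

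What remains is a pure optimisation over $m\le n$, the composition $m_0+m_1+m_2+m_3=m$, the sign $\sigma$ and the character $\tau$, and the plan is to show the extreme value is reached by putting all $m$ boxes into a single type with the trivial character, which is exactly $V^{\prime}=\pm(\epsilon^{m}-3\delta^{m})=\pm V^{\prime}_{NDP_m(GHZ)}$; that gives $|V^{\prime}_n|\le\max_{1\le k\le n}\lvert\epsilon^{k}-3\delta^{k}\rvert$ (and in the distillation regime this maximum equals $\epsilon^{n}-3\delta^{n}=V^{\prime}_{NDP_n(GHZ)}$), attained, after the appropriate output negation, by $NDP_k(GHZ)$ for the best $k\le n$; one may then append Lemma~\ref{} -- the earlier ``no adaptive advantage'' lemma -- to extend the bound to adaptive protocols as well. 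The two ingredients of this optimisation are that for fixed sign data $a\mapsto\epsilon^{a}\delta^{m-a}$ is monotone on $\{0,\dots,m\}$ because $|\epsilon|,|\delta|\le1$, so a spread-out composition is never better than a concentrated one, and that the $+,-,-,-$ coefficients of~\eqref{ghzineq} pair with the trivial character to realise the maximal spread $\epsilon^{m}-3\delta^{m}$ while any non-trivial $\tau$ cancels part of the $\delta$-block. I expect this last step -- ruling out that some mixed-type assignment twisted by a non-trivial character overtakes the concentrated one for particular $(\epsilon,\delta)$ -- to be the main obstacle, since the naive estimate $|V^{\prime}|\le\sum_i|\epsilon^{m_i}\delta^{m-m_i}|\le4$ is far too weak and the comparison must use $|\epsilon|,|\delta|\le1$ together with the precise coefficients of the class-2 inequality.
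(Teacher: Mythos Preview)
Your derivation of the matrix form and of $V'_{NDP_n(GHZ)}=\epsilon^n-3\delta^n$ is fine and matches the paper in spirit. The divergence is in the optimality part. The paper does \emph{not} restrict to parity output functions; it lets the final maps $f,g,h:\{0,1\}^n\to\{\pm1\}$ be completely arbitrary and expands them in the Fourier (character) basis of $\mathbb{Z}_2^n$. Using $\left(1-\epsilon\right)^{|a\oplus b\oplus c|}\left(1+\epsilon\right)^{n-|a\oplus b\oplus c|}=\sum_{z}\chi_z(a)\chi_z(b)\chi_z(c)\,\epsilon^{|z|}$ one gets, for each even-parity input, the exact identity $\langle A_xB_yC_z\rangle=\sum_{z\in\{0,1\}^n}\hat f_z\hat g_z\hat h_z\,\beta^{|z|}$ with $\beta\in\{\epsilon,\delta\}$ according to the row. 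Plugging into~\eqref{ghzineq} gives $V'=\sum_z \hat f_z\hat g_z\hat h_z\bigl(\epsilon^{|z|}-3\delta^{|z|}\bigr)$, and the bound $|V'|\le\max_{1\le k\le n}|\epsilon^k-3\delta^k|$ then follows in one line from $\sum_z|\hat f_z\hat g_z\hat h_z|\le \bigl(\max_z|\hat h_z|\bigr)\|\hat f\|_2\|\hat g\|_2\le 1$ (Parseval for $\pm1$-valued functions plus Cauchy--Schwarz). Equality is attained precisely when each of $\hat f,\hat g,\hat h$ is supported on a single $z$, i.e.\ when the output functions are characters --- which is exactly the parity protocol $NDP_k(GHZ)$.

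Your plan instead \emph{assumes} from the outset that the optimal output functions are affine over $GF(2)$, invoking the depth-$2$ observation as if it held for all $n$. That step is not justified in the paper for general $n$ and is, in fact, what the Fourier argument \emph{proves}: the parity restriction is a consequence, not a hypothesis. By imposing it prematurely you trade the clean $\sum_z|\hat f_z\hat g_z\hat h_z|\le1$ estimate for the four-term sign/character optimisation you yourself flag as ``the main obstacle,'' and you still have not covered non-parity output rules. The fix is to drop the parity ansatz for the outputs and run the Fourier computation above; your input-wiring discussion (the ``types'') is then unnecessary in the paper's setting, where all boxes receive the protocol input $(x,y,z)$.
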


\begin{proof}
We use Fourier transform for boolean function in our proof\cite{PhysRevA.82.042118}.
Let the $n$ bit tuples $(a_{i},b_{i},c_{i})$ be the output of the $n$ NLBs that Alice, Bob and Charlie obtain for inputs $x$,$y$ and $z$ respectively. The tuple $(a_{i},b_{i},c_{i})$ is drawn from $\{000,001,010,011,100,101,110,111\}$ with respect to the distribution $ \mu = \frac{1}{8}\{1+\epsilon,1-\epsilon,1-\epsilon,1+\epsilon,1-\epsilon,1+\epsilon,1+\epsilon,1-\epsilon\}$, where $\epsilon$ is the bias for the row corresponding to the inputs received by the players($x=0,y=0,z=0$ in this case). For inputs $x$,$y$ and $z$ let $\bar{A}$,$\bar{B}$,$\bar{C } \subseteq \{0,1\}^n$ be the set of strings for which Alice, Bob and Charlie's final output is $0$.

\begin{widetext}
Given that Alice, Bob and Charlie input bits $x$,$y$ and $z$ into $n$ NLBs, the probability that they receive bit strings $a$,$b$ and $c$ of length $n$ is given by:
\begin{flalign*}  
\scalemath{0.85}{P_{abc|xyz}} & \scalemath{0.85}{= \prod\limits _{1\leq i\leq n}\left(\frac{1-\epsilon}{8}+\frac{\epsilon}{4}\bigl[\thinspace\overline{a_{_{i}}\oplus b_{_{i}}\oplus c_{i}}=1\bigr]\right)}\\
\scalemath{0.85}{P_{abc|xyz}} & \scalemath{0.85}{= \left(\frac{1-\epsilon}{8}\right)^{n}\prod\limits _{1\leq i\leq n}\left(1+\frac{2\epsilon}{(1-\epsilon)}\bigl[\thinspace\overline{a_{_{i}}\oplus b_{_{i}}\oplus c_{i}}=1\bigr]\right)}\\
\scalemath{0.85}{P_{abc|xyz}} & \scalemath{0.85}{= \left(\frac{1-\epsilon}{8}\right)^{n}\left(\frac{1+\epsilon}{1-\epsilon}\right)^{n-\bigl|a\oplus b\oplus c\bigr|}}\\
\scalemath{0.85}{P_{abc|xyz}} & \scalemath{0.85}{= \frac{1}{8^{n}}\left(1-\epsilon\right)^{\bigl|a\oplus b\oplus c\bigr|}\left(1+\epsilon\right)^{n-\bigl|a\oplus b\oplus c\bigr|}}\\
\end{flalign*}
The probability $q_{(\bar{A},\bar{B},\bar{C})}(\epsilon)$ of obtaining output $000$ is obtained by summing over all of the output bit strings $a$ in $\bar{A}$, $b$ in $\bar{B}$ and $c$ in $\bar{C}$:
\begin{flalign*}
\scalemath{0.85}{q_{(\bar{A},\bar{B},\bar{C})}(\epsilon)} & \scalemath{0.85}{= \frac{1}{8^{n}}\sum_{a\in\bar{A}}\sum_{b\in\bar{B}}\sum_{c\in\bar{C}}\left(1-\epsilon\right)^{\bigl|a\oplus b\oplus c\bigr|}\left(1+\epsilon\right)^{n-\bigl|a\oplus b\oplus c\bigr|}}\\
\intertext{Representing the above expression in terms of $2^{n}$ characters of finite group ($\mathbb{Z}_{2}^{n}$)}
\scalemath{0.85}{q_{(\bar{A},\bar{B},\bar{C})}(\epsilon)} & \scalemath{0.85}{= \frac{1}{8^{n}}\sum_{a\in\bar{A}}\sum_{b\in\bar{B}}\sum_{c\in\bar{C}}\sum_{z\in\left\{ 0,1\right\} ^{n}}\chi_{z}\left(a\oplus b\oplus c\right)\epsilon^{\bigl|z\bigr|}}\\
\end{flalign*}
Using the homomorphism from additive group to multiplicative group
\begin{flalign*}
\scalemath{0.85}{q_{(\bar{A},\bar{B},\bar{C})}(\epsilon)} & \scalemath{0.85}{= \frac{1}{8^{n}}\sum_{a\in\bar{A}}\sum_{b\in\bar{B}}\sum_{c\in\bar{C}}\sum_{z\in\left\{ 0,1\right\} ^{n}}\chi_{z}(a)\chi_{z}(b)\chi_{z}(c)\epsilon^{\bigl|z\bigr|}}\\
\scalemath{0.85}{q_{(\bar{A},\bar{B},\bar{C})}(\epsilon)} & \scalemath{0.85}{= \frac{1}{8^{n}}\sum_{z\in\left\{ 0,1\right\} ^{n}}\epsilon^{\bigl|z\bigr|}\left(\sum_{a\in\bar{A}}\chi_{z}\left(a\right)\sum_{b\in\bar{B}}\chi_{z}\left(b\right)\sum_{c\in\bar{C}}\chi_{z}\left(c\right)\right)}\\
\scalemath{0.85}{q_{(\bar{A},\bar{B},\bar{C})}(\epsilon)} & \scalemath{0.85}{= \sum_{z\in\left\{ 0,1\right\} ^{n}}\epsilon^{\bigl|z\bigr|}\left(\left(\sum_{a\in\bar{A}}\frac{1}{2^{n}}\chi_{z}\left(a\right)\right)\left(\sum_{b\in\bar{B}}\frac{1}{2^{n}}\chi_{z}\left(b\right)\right)\left(\sum_{c\in\bar{C}}\frac{1}{2^{n}}\chi_{z}\left(c\right)\right)\right)}\\
\scalemath{0.85}{q_{(\bar{A},\bar{B},\bar{C})}(\epsilon)} & \scalemath{0.85}{= \sum_{z\in\left\{ 0,1\right\} ^{n}}\epsilon^{\bigl|z\bigr|}\left(\left(\sum_{s}\frac{1}{2^{n}}\chi_{z}\left(s\right)\scalemath{1.0}{\bigl[s\in\bar{A}\bigr]}\right)\left(\sum_{t}\frac{1}{2^{n}}\chi_{z}\left(t\right)\scalemath{1.0}{\bigl[t\in\bar{B}\bigr]}\right)\left(\sum_{u}\frac{1}{2^{n}}\chi_{z}\left(u\right)\scalemath{1.0}{\bigl[u\in\bar{C}\bigr]}\right)\right)}
\intertext{Let the functions $f$,$g$ and $h$ be +1 when $s$,$t$ and $u$ are in $\bar{A}$,$\bar{B}$ and $\bar{C}$, respectively and -1 otherwise:}
\scalemath{0.85}{q_{(\bar{A},\bar{B},\bar{C})}(\epsilon)} & \scalemath{0.85}{= \sum_{z\in\left\{ 0,1\right\} ^{n}}\epsilon^{\bigl|z\bigr|}\left(\left(\sum_{s}\frac{1}{2^{n}}\chi_{z}\left(s\right)\left(\frac{f\left(s\right)+1}{2}\right)\right)\left(\sum_{t}\frac{1}{2^{n}}\chi_{z}\left(t\right)\left(\frac{g\left(t\right)+1}{2}\right)\right)\left(\sum_{u}\frac{1}{2^{n}}\chi_{z}\left(u\right)\left(\frac{h\left(u\right)+1}{2}\right)\right)\right)}
\end{flalign*}
\begin{flalign*}
\intertext{we know that fourier coefficient$\hat{f}_{S}=\left\langle f,\chi_{S}\right\rangle =E_{x}\left[f(x).\chi_{S}(x)\right]=\frac{1}{2^{n}}\sum_{x\in\left\{ 0,1\right\} ^{n}}f(x)\chi_{S}(x)$
and $\hat{f}_{0}=\left\langle f,1\right\rangle$}
\scalemath{0.85}{q_{(\bar{A},\bar{B},\bar{C})}(\epsilon)} & \scalemath{0.85}{= \sum_{z\in\left\{ 0,1\right\} ^{n}}\epsilon^{\bigl|z\bigr|}\left(\left(\frac{\hat{f}_{z}+\bigl[z=0\bigr]}{2}\right)\left(\frac{\hat{g}_{z}+\bigl[z=0\bigr]}{2}\right)\left(\frac{\hat{h}_{z}+\bigl[z=0\bigr]}{2}\right)\right)}\\
\scalemath{0.85}{q_{(\bar{A},\bar{B},\bar{C})}(\epsilon)} & \scalemath{0.85}{= \frac{1}{8}\sum_{z\in\left\{ 0,1\right\} ^{n}}\epsilon^{\bigl|z\bigr|}\left(\left(\hat{f}_{z}+\bigl[z=0\bigr]\right)\left(\hat{g}_{z}+\bigl[z=0\bigr]\right)\left(\hat{h}_{z}+\bigl[z=0\bigr]\right)\right)}\\
\scalemath{0.85}{q_{(\bar{A},\bar{B},\bar{C})}(\epsilon)} & \scalemath{0.85}{= \frac{1}{8}\sum_{z\in\left\{ 0,1\right\} ^{n}}\epsilon^{\bigl|z\bigr|}\left(\left(\hat{f}_{z}\hat{g}_{z}\hat{h}_{z}+\left(1+\hat{f}_{0}+\hat{g}_{0}+\hat{h}_{0}+\hat{f}_{0}\hat{g}_{0}+\hat{f}_{0}\hat{h}_{0}+\hat{g}_{0}\hat{h}_{0}\right)\bigl[z=0\bigr]\right)\right)}
\intertext{we obtain $q_{(\bar{A},B,C)}$,$q_{(A,\bar{B},C)}$ and $q_{(A,B,\bar{C})}$ similarly by flipping sign of $\hat{f}_{z}$,$\hat{g}_{z}$ and $\hat{h}_{z}$}
\scalemath{0.85}{q_{(\bar{A},B,C)}(\epsilon)} & \scalemath{0.85}{= \sum_{z\in\left\{ 0,1\right\} ^{n}}\epsilon^{\bigl|z\bigr|}\left(\left(\frac{\hat{f}_{z}+\bigl[z=0\bigr]}{2}\right)\left(\frac{-\hat{g}_{z}+\bigl[z=0\bigr]}{2}\right)\left(\frac{-\hat{h}_{z}+\bigl[z=0\bigr]}{2}\right)\right)}\\
\scalemath{0.85}{q_{(\bar{A},B,C)}(\epsilon)} & \scalemath{0.85}{= \frac{1}{8}\sum_{z\in\left\{ 0,1\right\} ^{n}}\epsilon^{\bigl|z\bigr|}\left(\left(\hat{f}_{z}+\bigl[z=0\bigr]\right)\left(-\hat{g}_{z}+\bigl[z=0\bigr]\right)\left(-\hat{h}_{z}+\bigl[z=0\bigr]\right)\right)}\\
\scalemath{0.85}{q_{(\bar{A},B,C)}(\epsilon)} & \scalemath{0.85}{= \frac{1}{8}\sum_{z\in\left\{ 0,1\right\} ^{n}}\epsilon^{\bigl|z\bigr|}\left(\left(\hat{f}_{z}\hat{g}_{z}\hat{h}_{z}+\left(1+\hat{f}_{0}-\hat{g}_{0}-\hat{h}_{0}-\hat{f}_{0}\hat{g}_{0}-\hat{f}_{0}\hat{h}_{0}+\hat{g}_{0}\hat{h}_{0}\right)\bigl[z=0\bigr]\right)\right)}\\\\
\scalemath{0.85}{q_{(A,\bar{B},C)}(\epsilon)} & \scalemath{0.85}{= \sum_{z\in\left\{ 0,1\right\} ^{n}}\epsilon^{\bigl|z\bigr|}\left(\left(\frac{-\hat{f}_{z}+\bigl[z=0\bigr]}{2}\right)\left(\frac{\hat{g}_{z}+\bigl[z=0\bigr]}{2}\right)\left(\frac{-\hat{h}_{z}+\bigl[z=0\bigr]}{2}\right)\right)}\\
\scalemath{0.85}{q_{(A,\bar{B},C)}(\epsilon)} & \scalemath{0.85}{= \frac{1}{8}\sum_{z\in\left\{ 0,1\right\} ^{n}}\epsilon^{\bigl|z\bigr|}\left(\left(-\hat{f}_{z}+\bigl[z=0\bigr]\right)\left(\hat{g}_{z}+\bigl[z=0\bigr]\right)\left(-\hat{h}_{z}+\bigl[z=0\bigr]\right)\right)}\\
\scalemath{0.85}{q_{(A,\bar{B},C)}(\epsilon)} & \scalemath{0.85}{= \frac{1}{8}\sum_{z\in\left\{ 0,1\right\} ^{n}}\epsilon^{\bigl|z\bigr|}\left(\left(\hat{f}_{z}\hat{g}_{z}\hat{h}_{z}+\left(1-\hat{f}_{0}+\hat{g}_{0}-\hat{h}_{0}-\hat{f}_{0}\hat{g}_{0}+\hat{f}_{0}\hat{h}_{0}-\hat{g}_{0}\hat{h}_{0}\right)\bigl[z=0\bigr]\right)\right)}\\\\
\scalemath{0.85}{q_{(A,B,\bar{C})}(\epsilon)} & \scalemath{0.85}{= \sum_{z\in\left\{ 0,1\right\} ^{n}}\epsilon^{\bigl|z\bigr|}\left(\left(\frac{-\hat{f}_{z}+\bigl[z=0\bigr]}{2}\right)\left(\frac{-\hat{g}_{z}+\bigl[z=0\bigr]}{2}\right)\left(\frac{\hat{h}_{z}+\bigl[z=0\bigr]}{2}\right)\right)}\\
\scalemath{0.85}{q_{(A,B,\bar{C})}(\epsilon)} & \scalemath{0.85}{= \frac{1}{8}\sum_{z\in\left\{ 0,1\right\} ^{n}}\epsilon^{\bigl|z\bigr|}\left(\left(-\hat{f}_{z}+\bigl[z=0\bigr]\right)\left(-\hat{g}_{z}+\bigl[z=0\bigr]\right)\left(\hat{h}_{z}+\bigl[z=0\bigr]\right)\right)}\\
\scalemath{0.85}{q_{(A,B,\bar{C})}(\epsilon)} & \scalemath{0.85}{= \frac{1}{8}\sum_{z\in\left\{ 0,1\right\} ^{n}}\epsilon^{\bigl|z\bigr|}\left(\left(\hat{f}_{z}\hat{g}_{z}\hat{h}_{z}+\left(1-\hat{f}_{0}-\hat{g}_{0}+\hat{h}_{0}+\hat{f}_{0}\hat{g}_{0}-\hat{f}_{0}\hat{h}_{0}-\hat{g}_{0}\hat{h}_{0}\right)\bigl[z=0\bigr]\right)\right)}\\
\end{flalign*}
\begin{flalign*}
\intertext{The probability of obtaining output when $A\oplus B\oplus C=0$ is then:}
\scalemath{0.85}{r_{(A\oplus B\oplus C=0)}(\epsilon)} & \scalemath{0.85}{= q_{(\bar{A},\bar{B},\bar{C})}(\epsilon)+q_{(\bar{A},B,C)}(\epsilon)+q_{(A,\bar{B},C)}(\epsilon)+q_{(A,B,\bar{C})}(\epsilon)}\\
& \scalemath{0.85}{= \frac{1}{8}\sum_{z\in\left\{ 0,1\right\} ^{n}}\epsilon^{\bigl|z\bigr|}\left(4\hat{f}_{z}\hat{g}_{z}\hat{h}_{z}+4\right)}\\
& \scalemath{0.85}{= \frac{1}{2}\left(1+\sum_{z\in\left\{ 0,1\right\} ^{n}}\hat{f}_{z}\hat{g}_{z}\hat{h}_{z}\epsilon^{\bigl|z\bigr|}\right)}
\end{flalign*}
\begin{flalign*}
\intertext{The expected value for the bias $\epsilon$ can now be expressed:}
\scalemath{0.85}{r_{(A\oplus B\oplus C=0)}(\epsilon)-r_{(A\oplus B\oplus C=1)}(\epsilon)} & \scalemath{0.85}{= r_{(A\oplus B\oplus C=0)}(\epsilon)-\left(1-r_{(A\oplus B\oplus C=0)}(\epsilon)\right)}\\
& \scalemath{0.85}{= 2r_{(A\oplus B\oplus C=0)}(\epsilon)-1}\\
& \scalemath{0.85}{= 2\left[\frac{1}{2}\left(1+\sum_{z\in\left\{ 0,1\right\} ^{n}}\hat{f}_{z}\hat{g}_{z}\hat{h}_{z}\epsilon^{\bigl|z\bigr|}\right)\right]-1}\\
& \scalemath{0.85}{= \sum_{z\in\left\{ 0,1\right\} ^{n}}\hat{f}_{z}\hat{g}_{z}\hat{h}_{z}\epsilon^{\bigl|z\bigr|}}
\end{flalign*}
\begin{flalign*}
\intertext{The expected value for bias $\delta$ can be calculated similarly, using expected value of both biases in~\eqref{ghzineq} we obtain:}
\scalemath{0.85}{V^{\prime}} & \scalemath{0.85}{= \left(\sum_{z\in\left\{ 0,1\right\} ^{n}}\hat{f}_{z}\hat{g}_{z}\hat{h}_{z}\epsilon^{\bigl|z\bigr|}\right)-\left(\sum_{z\in\left\{ 0,1\right\} ^{n}}\hat{f}_{z}\hat{g}_{z}\hat{h}_{z}\delta^{\bigl|z\bigr|}\right)-\left(\sum_{z\in\left\{ 0,1\right\} ^{n}}\hat{f}_{z}\hat{g}_{z}\hat{h}_{z}\delta^{\bigl|z\bigr|}\right)-\left(\sum_{z\in\left\{ 0,1\right\} ^{n}}\hat{f}_{z}\hat{g}_{z}\hat{h}_{z}\delta^{\bigl|z\bigr|}\right)}\\
\scalemath{0.85}{V^{\prime}} & \scalemath{0.85}{= \sum_{z\in\left\{ 0,1\right\} ^{n}}\hat{f}_{z}\hat{g}_{z}\hat{h}_{z}\left(\epsilon^{\bigl|z\bigr|}-3\delta^{\bigl|z\bigr|}\right)}\\
& \scalemath{0.85}{\leq \sum_{z\in\left\{ 0,1\right\} ^{n}}\left|\hat{f}_{z}\right|\left|\hat{g}_{z}\right|\left|\hat{h}_{z}\right|\left|\epsilon^{\bigl|z\bigr|}-3\delta^{\bigl|z\bigr|}\right|}\\
& \scalemath{0.85}{\leq max_{k}\left|\epsilon^{k}-3\delta^{k}\right|\sum_{z\in\left\{ 0,1\right\} ^{n}}\left|\hat{f}_{z}\right|\left|\hat{g}_{z}\right|\left|\hat{h}_{z}\right|}\\
& \scalemath{0.85}{\leq max_{k}\left|\epsilon^{k}-3\delta^{k}\right|} &&\qed
\end{flalign*}
\end{widetext}
\end{proof}
\begin{figure}[t]
  \centering
  \begin{tikzpicture}
    \draw (-4.2,-4.5) rectangle (4.2,4.5);
     \node[inner sep=0pt] (russell) at (0,0)
    {\includegraphics[width=0.45\textwidth]{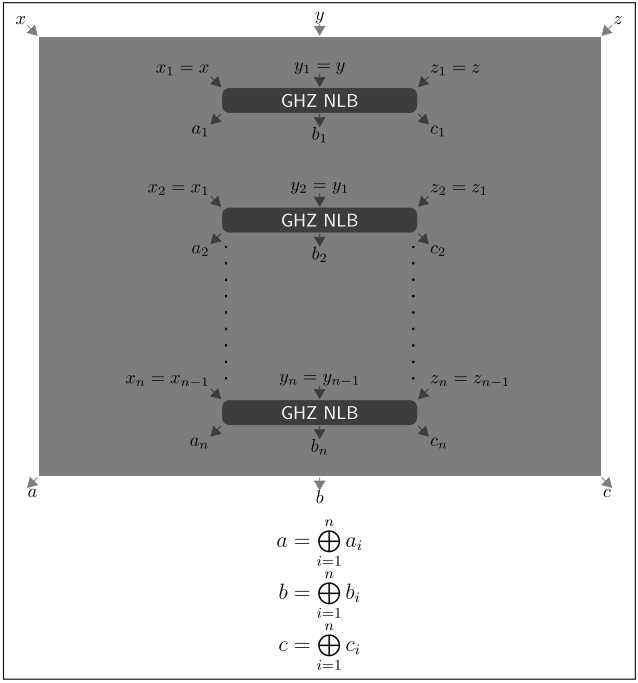}}; 
     \end{tikzpicture}
    \caption{$NDP_n(GHZ)$ distillation protocol}
  \label{fig:figure5}
\end{figure}

Using Lemma III.1 and Lemma III.2 we conclude that non-adaptive parity protocols are optimal for n copies of GHZ box.

\section{\label{START}Class 44,45 and 46:\protect}
We use the same representatives as in \cite{PhysRevA.71.022101,1751-8121-44-6-065303} and define each of class 44,45 and 46 perfect boxes as:  
\begin{flalign*}
&P^{44}=\begin{cases}
\frac{1}{4} & a\oplus b\oplus c=xyz\\
0 & otherwise
\end{cases}\\\\
&P^{45}=\begin{cases}
\frac{1}{4} & a\oplus b\oplus c=xy\oplus xz\\
0 & otherwise
\end{cases}\\\\
&P^{46}=\begin{cases}
\frac{1}{4} & a\oplus b\oplus c=xy\oplus yz\oplus xz\\
0 & otherwise
\end{cases}
\end{flalign*}
when we want to refer to any of the perfect boxes we use the notation:
\begin{flalign*}
&P^{N}=\begin{cases}
\frac{1}{4} & a\oplus b\oplus c=f(x,y,z)\\
0 & otherwise
\end{cases}
\end{flalign*}
where $N$ can be any of the perfect boxes and $f(x,y,z)$ is a boolean polynomial of $x,y,z$ over $GF(2)$ for that box.\\\\
To calculate box value we select the following inequality of class 41: 
\begin{multline}
\label{commonineq}
\scalemath{0.85}{-\left\langle A_{0} \right\rangle-\left\langle B_{0} \right\rangle-\left\langle C_{0} \right\rangle+\left\langle A_{0} B_{1} \right\rangle+\left\langle A_{1} B_{0} \right\rangle -\left\langle A_{1} B_{1} \right\rangle+\left\langle A_{0} C_{1} \right\rangle}\\ 
\scalemath{0.85}{+\left\langle A_{1} C_{0} \right\rangle-\left\langle A_{1} C_{1} \right\rangle+\left\langle B_{0} C_{0} \right\rangle +3 \left\langle A_{0} B_{0} C_{0} \right\rangle+\left\langle A_{0} B_{0} C_{1} \right\rangle +\left\langle A_{0} B_{1} C_{0} \right\rangle}\\ 
\scalemath{0.85}{+2 \left\langle A_{0} B_{1} C_{1} \right\rangle+4 \left\langle A_{1} B_{0} C_{0} \right\rangle-\left\langle A_{1} B_{0} C_{1} \right\rangle-\left\langle A_{1} B_{1} C_{0} \right\rangle-2 \left\langle A_{1} B_{1} C_{1} \right\rangle\leq7}
\end{multline}
calculating $V$ for each of the perfect boxes using ~\eqref{commonineq}, we get the same value of $11$.\\\\ 
$P^{c}$ is the tripartite correlated box with even parity and is defined as:
\begin{flalign*}
P^{c}=\begin{cases}
\frac{1}{4} & a\oplus b\oplus c=0\\
0 & otherwise
\end{cases}
\end{flalign*}
we are restricting ourselves to correlated noise model hence we define noisy box as:
\begin{flalign*}
P_{\delta}^{N}=&\left[\delta P^{N}+(1-\delta)P^{c}\right]\\
&\text{where $\delta \in [0,1]$}\\
\end{flalign*}
Using ~\eqref{commonineq} we get the box value $V$ for $P_{\delta}^{N}$:
\begin{equation*}
V=4\delta+7
\end{equation*}
Finally we present couple of notations that we use in next sections:
\begin{flalign*}
P_{(1-\delta)}^{N}=&\left[(1-\delta)P^{N}+\delta P^{c}\right]\\
&\text{where $\delta \in [0,1]$}
\end{flalign*}
\begin{flalign*}
&P^{f(x_{1},x_{2},.....,x_{n})}\rightarrow a\oplus b\oplus c=f(x_{1},x_{2},.....,x_{n})\\
&\text{where $f(x_{1},x_{2},.....,x_{n})$ is polynomial of inputs.}
\end{flalign*}

\section{$P_{\delta}^{N}$ distillation:\protect}
\subsection{Common protocols}
In this section we are going to present two distillation protocols of depth $2$ that are common to these three classes: 
\begin{figure}[t]
  \centering
  \begin{tikzpicture}
    \draw (-4.2,-3.0) rectangle (4.2,3.0);
     \node[inner sep=0pt] (russell) at (0,0)
    {\includegraphics[width=0.45\textwidth]{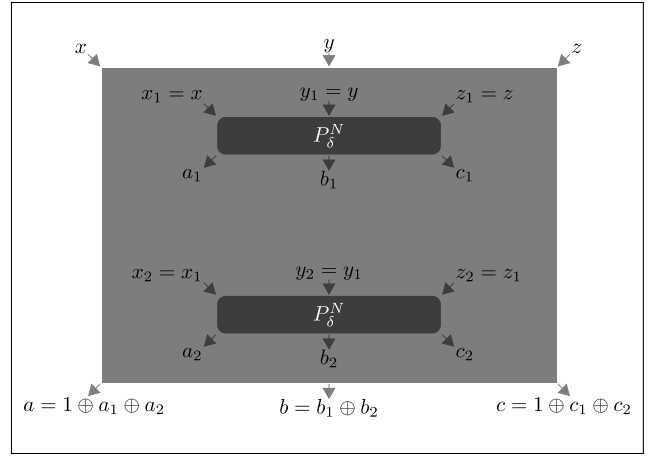}}; 
     \end{tikzpicture}
    \caption{Protocol 1}
  \label{fig:figure6}
\end{figure}
\subsubsection{Protocol 1}
This protocol(see Fig.~\ref{fig:figure6}) is the similar to the non-adaptive protocol of depth 2 which has been presented previously for GHZ box, we now give proof of it's distillability:
\begin{lemma}
Protocol 1(see Fig.~\ref{fig:figure6}) distills for two copies of $P_{\delta}^{N}$ with the same amount of $V^{\prime}$ for each class
\end{lemma}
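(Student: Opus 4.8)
The plan is to show that Protocol~1, applied to two copies of $P_\delta^N$, produces exactly the box $P^N_{2\delta(1-\delta)}$, with the map $\delta\mapsto 2\delta(1-\delta)$ and the class of the output box both independent of which perfect box $P^N\in\{P^{44},P^{45},P^{46}\}$ we start from; the value claim $V^{\prime}=4\cdot 2\delta(1-\delta)+7$ then follows immediately from the already-derived $V=4\delta+7$ for $P_\delta^N$.

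As with the depth-$2$ non-adaptive GHZ protocol of Fig.~\ref{fig:figure3}, I take Protocol~1 to feed each party's input unchanged into both boxes and to return the bitwise XOR of the two boxes' outputs, up to fixed sign bits $s_a,s_b,s_c$ with $s_a\oplus s_b\oplus s_c=0$. Fix a protocol input $(x,y,z)$ and set $F=f_N(x,y,z)$. Each box independently realizes $P^N$ with probability $\delta$ and $P^c$ with probability $1-\delta$; in either case its output pair $(a_i,b_i)$ is uniform on $\{0,1\}^2$ and independent of that coin, while $c_i=a_i\oplus b_i\oplus P_i$ with $P_i=F$ in the first case and $P_i=0$ in the second. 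Hence $(a,b)=(a_1\oplus a_2,\,b_1\oplus b_2)$ is uniform on $\{0,1\}^2$ and independent of $(P_1,P_2)$, and $a\oplus b\oplus c=P_1\oplus P_2$. Since $E\bigl[(-1)^{P_1\oplus P_2}\bigr]=\bigl(\delta(-1)^{F}+1-\delta\bigr)^2$, which is $1$ when $F=0$ and $(1-2\delta)^2=1-2\bigl(2\delta(1-\delta)\bigr)$ when $F=1$, the distilled box has uniform two-party marginals and, at exactly the inputs with $f_N=1$, parity bias $1-2\delta^\prime$ with $\delta^\prime=2\delta(1-\delta)$: that is, it is $P^N_{2\delta(1-\delta)}$. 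The essential point is that $f_N$ enters the output only through this one scalar, so the output box lies in the same class for all three starting classes.

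It then follows that $V^{\prime}=4\cdot 2\delta(1-\delta)+7=7+8\delta-8\delta^2$, the same function of $\delta$ for each of the three classes, and $V^{\prime}-V=4\delta(1-2\delta)>0$ for $\delta\in(0,\tfrac12)$, which is the distillation statement. (Taking $s_a\oplus s_b\oplus s_c=1$ would instead flip the sign of the relevant three-party correlators and give $V^{\prime}<V$ throughout $[0,1]$, so the sign bits must be chosen with even total parity.)

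The one step that genuinely needs care is pinning down the exact internal wiring of Protocol~1 from Fig.~\ref{fig:figure6}: the argument relies on both boxes receiving the unmodified protocol input. If box~$2$ instead received a nontrivial affine function of $(x,y,z)$, then $f_N$ would enter the output box through two separate evaluations and $V^{\prime}$ would in general depend on the class --- the inputs on which $f_{44},f_{45},f_{46}$ equal $1$ (one input, two inputs, and four inputs, respectively) are permuted differently by an affine relabeling of $(x,y,z)$, so the cancellation that already makes the three single-box values coincide would be destroyed. Once the identity preprocessing is confirmed, the rest is the routine bookkeeping above.
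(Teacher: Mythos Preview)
Your argument is correct and reaches the same conclusion as the paper: the output of Protocol~1 on two copies of $P_\delta^N$ is $P_{2\delta(1-\delta)}^N$, hence $V'=-8\delta^2+8\delta+7$ independently of $N\in\{44,45,46\}$, with distillation on $0<\delta<\tfrac12$.

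The route is essentially the same as the paper's, only packaged differently. The paper expands $P_\delta^N P_\delta^N$ into the four cases $P^NP^N,\,P^NP^c,\,P^cP^N,\,P^cP^c$, tracks $a\oplus b\oplus c$ symbolically in each (finding $P^c,P^N,P^N,P^c$ respectively), and reassembles with weights $\delta^2,\,\delta(1-\delta),\,\delta(1-\delta),\,(1-\delta)^2$ to get $2\delta(1-\delta)P^N+(1-2\delta(1-\delta))P^c$. Your expectation computation $E[(-1)^{P_1\oplus P_2}]=(\delta(-1)^F+1-\delta)^2$ is exactly this four-term sum collapsed via independence and multiplicativity; it is a cleaner one-line version of the same bookkeeping, and it makes the class-independence transparent (only the scalar $F=f_N(x,y,z)$ enters). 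Your reading of the wiring is also correct: from the paper's output rule $a=1\oplus a_1\oplus a_2$, $b=b_1\oplus b_2$, $c=1\oplus c_1\oplus c_2$ one has $s_a\oplus s_b\oplus s_c=0$, matching your assumption, and both boxes receive the unmodified inputs.
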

\begin{proof}
We start with two copies of $P_{\delta}^{N}$:
\begin{flalign*}
P_{\delta}^{N}P_{\delta}^{N}&=\left[\delta P^{N}+(1-\delta)P^{c}\right]\left[\delta P^{N}+(1-\delta)P^{c}\right]\\
&=\delta^{2}P^{N}P^{N}+\delta(1-\delta)\left(P^{N}P^{c}+P^{c}P^{N}\right)+(1-\delta)^{2}P^{c}P^{c}
\end{flalign*}
Hence we need each of the four relations above to get to the final box:\\
$P^{N}P^{N}\rightarrow$ For box 1 we have $a_{1}\oplus b_{1}\oplus c_{1}=f(x_{1},y_{1},z_{1})=f(x,y,z)$, for box 2 we have $a_{2}\oplus b_{2}\oplus c_{2}=f(x_{2},y_{2},z_{2})=f(x_{1},y_{1},z_{1})=f(x,y,z)$, equating we get: $1\oplus a_{1}\oplus a_{2}\oplus b_{1}\oplus b_{2}\oplus1\oplus c_{1}\oplus c_{2}=0$, $a\oplus b\oplus c=0$, hence $P^{c}$\\\\
$P^{N}P^{c}\rightarrow$ For box 1 we have $a_{1}\oplus b_{1}\oplus c_{1}=f(x_{1},y_{1},z_{1})=f(x,y,z)$, $0=f(x,y,z)\oplus a_{1}\oplus b_{1}\oplus c_{1}$, for box 2 we have $a_{2}\oplus b_{2}\oplus c_{2}=0$, equating we get: $1\oplus a_{1}\oplus a_{2}\oplus b_{1}\oplus b_{2}\oplus1\oplus c_{1}\oplus c_{2}=f(x,y,z)$, $a\oplus b\oplus c=f(x,y,z)$, hence $P^{N}$\\\\
$P^{c}P^{N}\rightarrow$ For box 1 we have $a_{1}\oplus b_{1}\oplus c_{1}=0$, for box 2 we have $a_{2}\oplus b_{2}\oplus c_{2}=f(x_{2},y_{2},z_{2})=f(x_{1},y_{1},z_{1})=f(x,y,z)$, $0=f(x,y,z)\oplus a_{2}\oplus b_{2}\oplus c_{2}$ equating we get: $1\oplus a_{1}\oplus a_{2}\oplus b_{1}\oplus b_{2}\oplus1\oplus c_{1}\oplus c_{2}=f(x,y,z)$, $a\oplus b\oplus c=f(x,y,z)$, hence $P^{N}$\\\\
$P^{c}P^{c}\rightarrow$ For box 1 we have $a_{1}\oplus b_{1}\oplus c_{1}=0$, for box 2 we have $a_{2}\oplus b_{2}\oplus c_{2}=0$, equating we get:
$1\oplus a_{1}\oplus a_{2}\oplus b_{1}\oplus b_{2}\oplus1\oplus c_{1}\oplus c_{2}=0$, $a\oplus b\oplus c=0$, hence $P^{c}$\\\\
The final box is given by:
\begin{flalign*}
P_{\delta}^{N}P_{\delta}^{N}&=\left(\delta(2-2\delta)\right)P^{N}+\left(1-\delta\left(2-2\delta\right)\right)P^{c}
\end{flalign*}
Applying \eqref{commonineq} to the final box we get:
\begin{flalign*}
&&V^{\prime}=-8\delta^{2}+8\delta+7 &&\qed
\end{flalign*}
\end{proof}
The protocol distills($V^{\prime}>V$) in the region of $0<\delta<\dfrac{1}{2}$, the grey curve in fig.~\ref{fig:figure11} highlights the protocol.

\subsubsection{Protocol 2}
\begin{figure}[t]
  \centering
  \begin{tikzpicture}
    \draw (-4.2,-3.0) rectangle (4.2,3.0);
       \node[inner sep=0pt] (russell) at (0,0)
    {\includegraphics[width=0.45\textwidth]{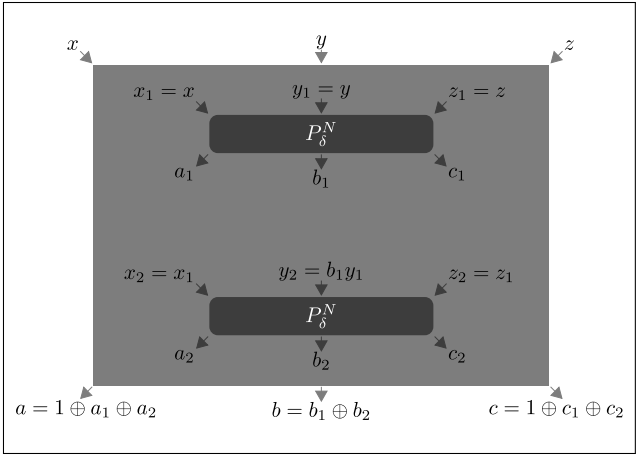}}; 
     \end{tikzpicture}
    \caption{Protocol 2}
  \label{fig:figure7}
\end{figure}
We now turn our attention to another common protocol(see fig.~\ref{fig:figure7}), protocol 2 is a depth 2 adaptive protocol with final box and expression given below: 
\begin{lemma}
Protocol 2(see fig.~\ref{fig:figure7}) distills for two copies of $P_{\delta}^{N}$ with the different amount of $V^{\prime}$ for each class
\end{lemma}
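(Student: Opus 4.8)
The plan is to follow the same template as the proof of the preceding lemma (Protocol 1), but to track the adaptive feedback that distinguishes Protocol 2. I would begin by expanding the two copies into the convex combination
\[
P_{\delta}^{N}P_{\delta}^{N}=\delta^{2}\,P^{N}P^{N}+\delta(1-\delta)\left(P^{N}P^{c}+P^{c}P^{N}\right)+(1-\delta)^{2}\,P^{c}P^{c},
\]
and then propagate each of the four product terms through the wiring of Fig.~\ref{fig:figure7}. The first box always imposes its class relation $a_{1}\oplus b_{1}\oplus c_{1}=f(x,y,z)$ (or $=0$ for a $P^{c}$ factor), and, because the protocol is adaptive, the second box receives inputs $x_{2}=f_{2}(x,a_{1})$, $y_{2}=g_{2}(y,b_{1})$, $z_{2}=h_{2}(z,c_{1})$ and imposes $a_{2}\oplus b_{2}\oplus c_{2}=f(x_{2},y_{2},z_{2})$ (or $=0$). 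Substituting the feedback functions into $f$, reducing modulo $u^{2}=u$ over $GF(2)$, combining with the first-box relation, and finally applying the output functions $f_{3},g_{3},h_{3}$, each branch collapses to an XOR relation $a\oplus b\oplus c=g(x,y,z)$ for some polynomial $g$ built from $f$; and since $f$ is $xyz$, $xy\oplus xz$, or $xy\oplus yz\oplus xz$ for the three classes, the resulting polynomials, and hence $V^{\prime}$, will differ.

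The key technical step is carrying out this substitution for each of the four branches and each of the three classes, and recognising the resulting box. I expect the $P^{N}P^{N}$ and $P^{c}P^{c}$ branches to behave uniformly as in Protocol 1 (landing on $P^{c}$ or a fixed box), but the mixed branches $P^{N}P^{c}$ and $P^{c}P^{N}$ are where the \emph{class dependence} enters, because the expansion of $f(x_{2},y_{2},z_{2})$ generically contains cross terms coupling an input to a first-box output (terms such as $x a_{1}$ or $a_{1}b_{1}$); after the correlated first box has fixed $a_{1}\oplus b_{1}\oplus c_{1}$, these reduce to honest polynomials in $x,y,z$, but not always to one of $f$ or $0$. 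So I anticipate having to identify one or two genuinely new perfect boxes $P^{g}$ per class and to compute their correlators $\langle A_{x}B_{y}C_{z}\rangle=\sum_{a,b,c}(-1)^{a\oplus b\oplus c}P(abc|xyz)$ directly from the relation $a\oplus b\oplus c=g(x,y,z)$.

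Once the final box of Protocol 2 is written as a convex combination $\sum_{i}\lambda_{i}(\delta)\,P^{g_{i}}$ with weights quadratic in $\delta$, the remainder is mechanical: insert the correlators into inequality~\eqref{commonineq} to get $V^{\prime}$ as a quadratic polynomial in $\delta$ for each of classes 44, 45 and 46, and compare with the single-box value $V=4\delta+7$ to read off the distillation region. The statement to be established is then exactly that these three quadratics are pairwise distinct; the main obstacle is the bookkeeping in the adaptive substitution — getting the degree reductions and the cancellations right simultaneously for all three defining polynomials $f$ — rather than anything conceptually new beyond the analysis already used for Protocol 1.
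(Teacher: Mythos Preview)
Your overall plan mirrors the paper's approach: expand the product into four branches, push each through the wiring, collect the resulting convex combination, and evaluate \eqref{commonineq}. Two points need correction, however.

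First, your mechanism for eliminating the adaptive output is wrong. Protocol~2 feeds only Bob's bit $b_{1}$ to the second box (specifically $y_{2}=y\,b_{1}$, with $x_{2}=x$ and $z_{2}=z$); the first-box relation fixes only the parity $a_{1}\oplus b_{1}\oplus c_{1}$, not $b_{1}$ by itself. Hence a term such as $xyb_{1}$ in $f(x_{2},y_{2},z_{2})$ \emph{cannot} be reduced to an ``honest polynomial in $x,y,z$'' via that constraint. The correct observation, used throughout the paper's proof, is that the single-party marginal of $b_{1}$ is uniform for every box in play, so a relation like $a\oplus b\oplus c = xyz(1\oplus b_{1})$ defines not a single perfect box $P^{g}$ but the equal mixture $\tfrac{1}{2}P^{44}+\tfrac{1}{2}P^{c}$. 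Without this averaging over $b_{1}$ you will not obtain the $\tfrac{1}{2}$ weights that appear in the final boxes and hence will get the wrong $V'$.

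Second, your guess about which branches carry the class dependence is inverted. Whenever the \emph{second} box is $P^{c}$ its output parity vanishes regardless of inputs, so the adaptive feedback is irrelevant: $P^{N}P^{c}\to P^{N}$ and $P^{c}P^{c}\to P^{c}$ uniformly across the three classes. The class-dependent branches are $P^{N}P^{N}$ and $P^{c}P^{N}$, because only there is $f(x,yb_{1},z)$ actually evaluated, and its expansion (and hence the resulting $b_{1}$-mixture) depends on which monomials occur in $f$. Once these two points are corrected, the remainder --- assembling the mixtures with weights $\delta^{2}$, $\delta(1-\delta)$, $(1-\delta)^{2}$, applying \eqref{commonineq}, and reading off the three distinct quadratics in $\delta$ --- proceeds exactly as you describe.
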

\begin{proof}
For each class we derive a different final box and expression:\\
\texttt{Class 44:}
\begin{flalign*}
\scalemath{0.9}{P_{\delta}^{44}P_{\delta}^{44}}&\scalemath{0.9}{=\delta^{2}P^{44}P^{44}+\delta(1-\delta)\left(P^{44}P^{c}+P^{c}P^{44}\right)+(1-\delta)^{2}P^{c}P^{c}}
\end{flalign*}
$P^{44}P^{44}\rightarrow$ For box 1 we have $a_{1}\oplus b_{1}\oplus c_{1}=x_{1}y_{1}z_{1}=xyz$, $0=xyz\oplus a_{1}\oplus b_{1}\oplus c_{1}$, for box 2 we have $a_{2}\oplus b_{2}\oplus c_{2}=x_{2}y_{2}z_{2}=x_{1}y_{1}z_{1}b_{1}=xyzb_{1}$, $0=xyzb_{1}\oplus a_{2}\oplus b_{2}\oplus c_{2}$ equating we get: $1\oplus a_{1}\oplus a_{2}\oplus b_{1}\oplus b_{2}\oplus1\oplus c_{1}\oplus c_{2}=xyz\oplus xyzb_{1}$, $a\oplus b\oplus c=xyz\left(1\oplus b_{1}\right)$, hence $\frac{1}{2}\left(P^{44}+P^{c}\right)$\\\\
$P^{44}P^{c}\rightarrow$ For box 1 we have $a_{1}\oplus b_{1}\oplus c_{1}=x_{1}y_{1}z_{1}=xyz$, $0=xyz\oplus a_{1}\oplus b_{1}\oplus c_{1}$, for box 2 we have $a_{2}\oplus b_{2}\oplus c_{2}=0$, equating we get: $1\oplus a_{1}\oplus a_{2}\oplus b_{1}\oplus b_{2}\oplus1\oplus c_{1}\oplus c_{2}=xyz$, $a\oplus b\oplus c=xyz$, hence $P^{44}$\\\\
$P^{c}P^{44}\rightarrow$ For box 1 we have $a_{1}\oplus b_{1}\oplus c_{1}=0$, for box 2 we have $a_{2}\oplus b_{2}\oplus c_{2}=x_{2}y_{2}z_{2}=x_{1}y_{1}z_{1}b_{1}=xyzb_{1}$, $0=xyzb_{1}\oplus a_{2}\oplus b_{2}\oplus c_{2}$ equating we get: $1\oplus a_{1}\oplus a_{2}\oplus b_{1}\oplus b_{2}\oplus1\oplus c_{1}\oplus c_{2}=xyzb_{1}$, $a\oplus b\oplus c=xyzb_{1}$, hence $\frac{1}{2}\left(P^{c}+P^{44}\right)$\\\\
$P^{c}P^{c}\rightarrow$ For box 1 we have $a_{1}\oplus b_{1}\oplus c_{1}=0$, for box 2 we have $a_{2}\oplus b_{2}\oplus c_{2}=0$, equating we get: $1\oplus a_{1}\oplus a_{2}\oplus b_{1}\oplus b_{2}\oplus1\oplus c_{1}\oplus c_{2}=0$, $a\oplus b\oplus c=0$, hence $P^{c}$\\\\
The final box is given by:
\begin{flalign*}
\scalemath{0.9}{P_{\delta}^{44}P_{\delta}^{44}}&\scalemath{0.9}{=P^{44}\left(\frac{3}{2}\delta-\delta^{2}\right)+P^{c}\left(1-\left(\frac{3}{2}\delta-\delta^{2}\right)\right)}
\end{flalign*}
Applying \eqref{commonineq} to the final box we get:
\begin{flalign*}
V^{\prime}=-4\delta^{2}+6\delta+7
\end{flalign*}
\texttt{Class 45:}
\begin{flalign*}
\scalemath{0.9}{P_{\delta}^{45}P_{\delta}^{45}}&\scalemath{0.9}{=\delta^{2}P^{45}P^{45}+\delta(1-\delta)\left(P^{45}P^{c}+P^{c}P^{45}\right)+(1-\delta)^{2}P^{c}P^{c}}
\end{flalign*}
$P^{45}P^{45}\rightarrow$ For box 1 we have $a_{1}\oplus b_{1}\oplus c_{1}=x_{1}y_{1}\oplus x_{1}z_{1}=xy\oplus xz$, $0=xy\oplus xz\oplus a_{1}\oplus b_{1}\oplus c_{1}$, for box 2 we have $a_{2}\oplus b_{2}\oplus c_{2}=x_{2}y_{2}\oplus x_{2}z_{2}=x_{1}y_{1}b_{1}\oplus x_{1}z_{1}=xyb_{1}\oplus xz$,
$0=xyb_{1}\oplus xz\oplus a_{2}\oplus b_{2}\oplus c_{2}$, equating we get: $1\oplus a_{1}\oplus a_{2}\oplus b_{1}\oplus b_{2}\oplus1\oplus c_{1}\oplus c_{2}=xy\oplus xyb_{1}\oplus xz\oplus xz$, $a\oplus b\oplus c=xy\left(1\oplus b_{1}\right)$, hence $\frac{1}{2}\left(P^{xy}+P^{c}\right)$\\\\
$P^{45}P^{c}\rightarrow$ For box 1 we have $a_{1}\oplus b_{1}\oplus c_{1}=x_{1}y_{1}\oplus x_{1}z_{1}=xy\oplus xz$, $0=xy\oplus xz\oplus a_{1}\oplus b_{1}\oplus c_{1}$, for box 2 we have $a_{2}\oplus b_{2}\oplus c_{2}=0$, equating we get: $1\oplus a_{1}\oplus a_{2}\oplus b_{1}\oplus b_{2}\oplus1\oplus c_{1}\oplus c_{2}=xy\oplus xz$, $a\oplus b\oplus c=xy\oplus xz$, hence $P^{45}$\\\\
$P^{c}P^{45}\rightarrow$ For box 1 we have $a_{1}\oplus b_{1}\oplus c_{1}=0$, for box 2 we have $a_{2}\oplus b_{2}\oplus c_{2}=x_{2}y_{2}\oplus x_{2}z_{2}=x_{1}y_{1}b_{1}\oplus x_{1}z_{1}=xyb_{1}\oplus xz$, $0=xyb_{1}\oplus xz\oplus a_{2}\oplus b_{2}\oplus c_{2}$ equating we get: $1\oplus a_{1}\oplus a_{2}\oplus b_{1}\oplus b_{2}\oplus1\oplus c_{1}\oplus c_{2}=xyb_{1}\oplus xz$, $a\oplus b\oplus c=xyb_{1}\oplus xz$, hence $\frac{1}{2}\left(P^{xz}+P^{45}\right)$\\\\
$P^{c}P^{c}\rightarrow$ For box 1 we have $a_{1}\oplus b_{1}\oplus c_{1}=0$, for box 2 we have $a_{2}\oplus b_{2}\oplus c_{2}=0$, equating $a_{1}\oplus b_{1}\oplus c_{1}=a_{2}\oplus b_{2}\oplus c_{2}$, $1\oplus a_{1}\oplus a_{2}\oplus b_{1}\oplus b_{2}\oplus1\oplus c_{1}\oplus c_{2}=0$, $a\oplus b\oplus c=0$, hence $P^{c}$\\\\
The final box is given by:
\begin{flalign*}
\scalemath{0.9}{P_{\delta}^{45}P_{\delta}^{45}}&\scalemath{0.9}{=\delta\left(\frac{1}{2}\left(\delta P^{xy}+(1-\delta)P^{xz}\right)+\frac{1}{2}P_{1-\delta}^{45}\right)+(1-\delta)P_{\delta}^{45}}
\end{flalign*}
Applying \eqref{commonineq} to the final box we get:
\begin{flalign*}
V^{\prime}=-6\delta^{2}+9\delta+7
\end{flalign*}
\texttt{Class 46:}
\begin{flalign*}
\scalemath{0.9}{P_{\delta}^{46}P_{\delta}^{46}}&\scalemath{0.9}{=\delta^{2}P^{46}P^{46}+\delta(1-\delta)\left(P^{46}P^{c}+P^{c}P^{46}\right)+(1-\delta)^{2}P^{c}P^{c}}
\end{flalign*}
$P^{46}P^{46}\rightarrow$ For box 1 we have $a_{1}\oplus b_{1}\oplus c_{1}=x_{1}y_{1}\oplus y_{1}z_{1}\oplus x_{1}z_{1}=xy\oplus yz\oplus xz$, $0=xy\oplus yz\oplus xz\oplus a_{1}\oplus b_{1}\oplus c_{1}$, for box 2 we have $a_{2}\oplus b_{2}\oplus c_{2}=x_{2}y_{2}\oplus y_{2}z_{2}\oplus x_{2}z_{2}=x_{1}y_{1}b_{1}\oplus y_{1}z_{1}b_{1}\oplus x_{1}z_{1}=xyb_{1}\oplus yzb_{1}\oplus xz$, $0=xyb_{1}\oplus yzb_{1}\oplus xz\oplus a_{2}\oplus b_{2}\oplus c_{2}$,
equating we get: $1\oplus a_{1}\oplus a_{2}\oplus b_{1}\oplus b_{2}\oplus1\oplus c_{1}\oplus c_{2}=xyb_{1}\oplus yzb_{1}\oplus xz\oplus xy\oplus yz\oplus xz$, $a\oplus b\oplus c=xyb_{1}\oplus xy\oplus yzb_{1}\oplus yz$, $a\oplus b\oplus c=\left(1\oplus b_{1}\right)\left(xy\oplus yz\right)$,hence $\frac{1}{2}\left(P^{xy\oplus yz}+P^{c}\right)$\\\\
$P^{46}P^{c}\rightarrow$ For box 1 we have $a_{1}\oplus b_{1}\oplus c_{1}=x_{1}y_{1}\oplus y_{1}z_{1}\oplus x_{1}z_{1}=xy\oplus yz\oplus xz$,
$0=xy\oplus yz\oplus xz\oplus a_{1}\oplus b_{1}\oplus c_{1}$, for box 2 we have $a_{2}\oplus b_{2}\oplus c_{2}=0$, equating we get: $1\oplus a_{1}\oplus a_{2}\oplus b_{1}\oplus b_{2}\oplus1\oplus c_{1}\oplus c_{2}=xy\oplus yz\oplus xz$, $a\oplus b\oplus c=xy\oplus yz\oplus xz$, hence $P^{46}$\\\\
$P^{c}P^{46}\rightarrow$ For box 1 we have $a_{1}\oplus b_{1}\oplus c_{1}=0$, for box 2 we have $a_{2}\oplus b_{2}\oplus c_{2}=x_{2}y_{2}\oplus y_{2}z_{2}\oplus x_{2}z_{2}=x_{1}y_{1}b_{1}\oplus y_{1}z_{1}b_{1}\oplus x_{1}z_{1}=xyb_{1}\oplus yzb_{1}\oplus xz$, $0=xyb_{1}\oplus yzb_{1}\oplus xz\oplus a_{2}\oplus b_{2}\oplus c_{2}$, equating we get: $1\oplus a_{1}\oplus a_{2}\oplus b_{1}\oplus b_{2}\oplus1\oplus c_{1}\oplus c_{2}=xyb_{1}\oplus yzb_{1}\oplus xz$, $a\oplus b\oplus c=b_{1}\left(xy\oplus yz\right)\oplus xz$, hence $\frac{1}{2}\left(P^{xz}+P^{46}\right)$\\\\
$P^{c}P^{c}\rightarrow$ For box 1 we have $a_{1}\oplus b_{1}\oplus c_{1}=0$, for box 2 we have $a_{2}\oplus b_{2}\oplus c_{2}=0$, equating $a_{1}\oplus b_{1}\oplus c_{1}=a_{2}\oplus b_{2}\oplus c_{2}$, $1\oplus a_{1}\oplus a_{2}\oplus b_{1}\oplus b_{2}\oplus1\oplus c_{1}\oplus c_{2}=0$, $a\oplus b\oplus c=0$, hence $P^{c}$\\\\
The final box is given by:
\begin{flalign*}
\scalemath{0.9}{P_{\delta}^{46}P_{\delta}^{46}}&\scalemath{0.9}{=\delta\left(\frac{1}{2}\left(\delta P^{xy\oplus yz}+(1-\delta)P^{xz}\right)+\frac{1}{2}P_{1-\delta}^{46}\right)+(1-\delta)P_{\delta}^{46}}
\end{flalign*}
Applying \eqref{commonineq} to the final box we get:
\begin{flalign*}
&&V^{\prime}=-10\delta^{2}+9\delta+7 &&\qed
\end{flalign*}
\end{proof}
For class 45 protocol 2 distills in the region of $0<\delta<\dfrac{5}{6}$ and both for class 44 and class 46 in the region of $0<\delta<\dfrac{1}{2}$, white curves in fig.~\ref{fig:figure11} refer to protocol 2.

\subsection{Unique protocols}
We present three different protocols that atleast distill for each of their respective classes:

\subsubsection{Protocol 3}
\begin{figure}[t]
  \centering
  \begin{tikzpicture}
    \draw (-4.2,-3.0) rectangle (4.2,3.0);
     \node[inner sep=0pt] (russell) at (0,0)
    {\includegraphics[width=0.45\textwidth]{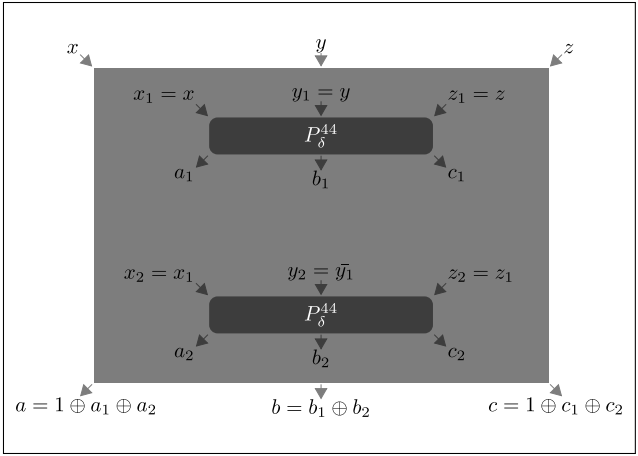}}; 
     \end{tikzpicture}
    \caption{Protocol 3}
  \label{fig:figure8}
\end{figure}
\begin{lemma}
Protocol 3(see fig.~\ref{fig:figure8}) distills for two copies of $P_{\delta}^{44}$
\end{lemma}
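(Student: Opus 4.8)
The plan is to follow exactly the template that worked for Protocols 1 and 2: start from the two-copy state, expand it into a convex combination of the four product boxes, push each product box through the wiring of Protocol 3 to read off which target box it produces, recombine, and finally evaluate \eqref{commonineq}.

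First I would write
\begin{flalign*}
P_{\delta}^{44}P_{\delta}^{44}=\delta^{2}P^{44}P^{44}+\delta(1-\delta)\left(P^{44}P^{c}+P^{c}P^{44}\right)+(1-\delta)^{2}P^{c}P^{c},
\end{flalign*}
so that it suffices to resolve the four cases $P^{44}P^{44}$, $P^{44}P^{c}$, $P^{c}P^{44}$ and $P^{c}P^{c}$. For each case I would substitute the box-$1$ parity relation ($a_{1}\oplus b_{1}\oplus c_{1}=x_{1}y_{1}z_{1}=xyz$ or $=0$) and the box-$2$ parity relation, after feeding box $2$ the inputs $x_{2},y_{2},z_{2}$ that Protocol 3 (Fig.~\ref{fig:figure8}) prescribes as functions of $x_{1},y_{1},z_{1}$ and of the box-$1$ outputs, and then form the protocol output by XOR-ing the box outputs as dictated by the final wiring functions. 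The XOR of the two parity relations then expresses $a\oplus b\oplus c$ as a Boolean polynomial in $x,y,z$ and the intermediate bit(s): when that polynomial is $0$ or $xyz$ it yields $P^{c}$ or $P^{44}$ directly, and when it factors as $(1\oplus b_{1})$ times a polynomial in $x,y,z$ I would argue, exactly as in the Protocol 2 derivation, that the relevant intermediate bit is marginally uniform, so that case contributes a uniform mixture such as $\frac{1}{2}\left(P^{44}+P^{c}\right)$.

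Collecting the coefficient of $P^{44}$ over the four cases gives the final box in the form $P_{\delta}^{44}P_{\delta}^{44}=\delta^{\prime}P^{44}+(1-\delta^{\prime})P^{c}$ with $\delta^{\prime}$ a quadratic in $\delta$. Since \eqref{commonineq} depends affinely on the $P^{44}$-weight and returns $V=4\delta+7$ for $P_{\delta}^{44}$, the protocol value is $V^{\prime}=4\delta^{\prime}+7$, and distillation reduces to the elementary inequality $\delta^{\prime}>\delta$, which I expect to hold on an interval $0<\delta<\delta_{0}$; I would state that interval explicitly and point to the distillation plot.

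The main obstacle I anticipate is the bookkeeping in the $P^{44}P^{44}$ case: because class 44 carries a cubic term $xyz$ rather than the quadratics of classes 45 and 46, its interaction with the fed-forward output bit is more delicate, and one must check carefully that the factored parity polynomial really does leave behind an independent, uniformly distributed bit so that the $\frac{1}{2}$-weights are legitimate. Getting that single case right is what pins down $\delta^{\prime}$ and hence determines whether, and on what range of $\delta$, Protocol 3 distills.
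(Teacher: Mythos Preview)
Your template is right in spirit, but the structural assumption that drives the rest of the argument is wrong for Protocol~3, and the proof would not go through as you have sketched it.

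Protocol~3 is \emph{non-adaptive}: box~2 receives $x_{2}=x$, $y_{2}=1\oplus y$, $z_{2}=z$, with no fed-forward output bit. Consequently none of the four product cases produces a parity polynomial with a dangling $b_{1}$; there is no $(1\oplus b_{1})$ factor and no $\tfrac{1}{2}$-mixtures appear. Concretely, $P^{44}P^{44}$ gives $a\oplus b\oplus c=xyz\oplus x(1\oplus y)z=xz$, i.e.\ the box $P^{xz}$, and $P^{c}P^{44}$ gives $a\oplus b\oplus c=xz\oplus xyz$, i.e.\ $P^{xz\oplus xyz}$. Neither of these is $P^{44}$ or $P^{c}$, so the final box is \emph{not} of the form $\delta'P^{44}+(1-\delta')P^{c}$, and the shortcut $V'=4\delta'+7$ is unavailable.

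What the paper actually does is keep the final box as
\[
\delta\bigl(\delta\,P^{xz}+(1-\delta)\,P^{xz\oplus xyz}\bigr)+(1-\delta)P_{\delta}^{44}
\]
and evaluate inequality~\eqref{commonineq} directly on each of the new components $P^{xz}$ and $P^{xz\oplus xyz}$. That evaluation yields $V'=6\delta+7$, which is linear in $\delta$ and strictly exceeds $V=4\delta+7$ on the whole range $0<\delta\le 1$, not merely on a sub-interval $0<\delta<\delta_{0}$. So the ``main obstacle'' you anticipate (checking uniformity of an intermediate bit) does not arise here, but a different step you have not planned for---computing \eqref{commonineq} on boxes outside the $P^{44}/P^{c}$ family---is essential.
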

\begin{proof}
We start with two copies of $P_{\delta}^{44}$:
\begin{flalign*}
\scalemath{0.9}{P_{\delta}^{44}P_{\delta}^{44}}&\scalemath{0.9}{=\delta^{2}P^{44}P^{44}+\delta(1-\delta)\left(P^{44}P^{c}+P^{c}P^{44}\right)+(1-\delta)^{2}P^{c}P^{c}}
\end{flalign*}
$P^{44}P^{44}\rightarrow$ For box 1 we have $a_{1}\oplus b_{1}\oplus c_{1}=x_{1}y_{1}z_{1}=xyz$, $0=xyz\oplus a_{1}\oplus b_{1}\oplus c_{1}$, for box 2 we have $a_{2}\oplus b_{2}\oplus c_{2}=x_{2}y_{2}z_{2}=x_{1}(1\oplus y_{1})z_{1}=xz\oplus xyz$, $0=xz\oplus xyz\oplus a_{2}\oplus b_{2}\oplus c_{2}$ equating we get: $1\oplus a_{1}\oplus a_{2}\oplus b_{1}\oplus b_{2}\oplus1\oplus c_{1}\oplus c_{2}=xyz\oplus xz\oplus xyz$, $a\oplus b\oplus c=xz$, hence $P^{xz}$\\\\
$P^{44}P^{c}\rightarrow$ For box 1 we have $a_{1}\oplus b_{1}\oplus c_{1}=x_{1}y_{1}z_{1}=xyz$, $0=xyz\oplus a_{1}\oplus b_{1}\oplus c_{1}$, for box 2 we have $a_{2}\oplus b_{2}\oplus c_{2}=0$, equating we get: $1\oplus a_{1}\oplus a_{2}\oplus b_{1}\oplus b_{2}\oplus1\oplus c_{1}\oplus c_{2}=xyz$, $a\oplus b\oplus c=xyz$, hence $P^{44}$\\\\
$P^{c}P^{44}\rightarrow$ For box 1 we have $a_{1}\oplus b_{1}\oplus c_{1}=0$, for box 2 we have $a_{2}\oplus b_{2}\oplus c_{2}=x_{2}y_{2}z_{2}=x_{1}(1\oplus y_{1})z_{1}=xz\oplus xyz$, $0=xz\oplus xyz\oplus a_{2}\oplus b_{2}\oplus c_{2}$ equating we get: $1\oplus a_{1}\oplus a_{2}\oplus b_{1}\oplus b_{2}\oplus1\oplus c_{1}\oplus c_{2}=xz\oplus xyz$, $a\oplus b\oplus c=xz\oplus xyz$, hence $P^{xz\oplus xyz}$\\\\
$P^{c}P^{c}\rightarrow$ For box 1 we have $a_{1}\oplus b_{1}\oplus c_{1}=0$, for box 2 we have $a_{2}\oplus b_{2}\oplus c_{2}=0$, equating we get: $1\oplus a_{1}\oplus a_{2}\oplus b_{1}\oplus b_{2}\oplus1\oplus c_{1}\oplus c_{2}=0$, $a\oplus b\oplus c=0$, hence $P^{c}$\\\\
The final box is given by:
\begin{flalign*}
P_{\delta}^{44}P_{\delta}^{44}&=\delta\left(\delta P^{xz}+(1-\delta)P^{xz\oplus xyz}\right)+(1-\delta)P_{\delta}^{44}
\end{flalign*}
Applying \eqref{commonineq} to the final box we get:
\begin{flalign*}
&&V^{\prime}=6\delta+7 &&\qed
\end{flalign*}
\end{proof}
The protocol distills in the region of $0<\delta\leqslant1$.

\subsubsection{Protocol 4}
\begin{figure}[t]
  \centering
  \begin{tikzpicture}
    \draw (-4.2,-3.0) rectangle (4.2,3.0);
        \node[inner sep=0pt] (russell) at (0,0)
    {\includegraphics[width=0.45\textwidth]{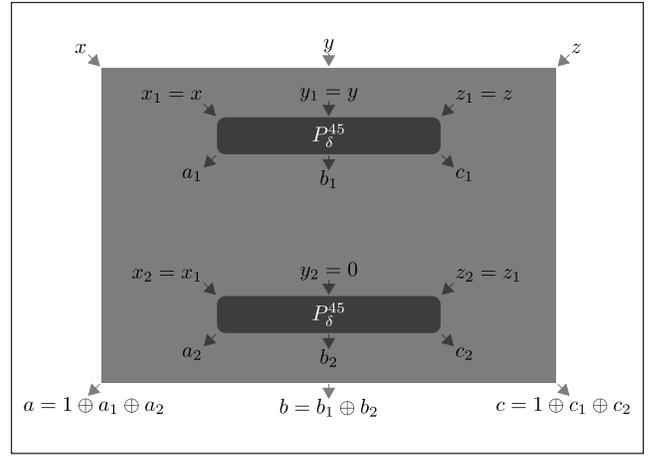}}; 
      \end{tikzpicture}
    \caption{Protocol 4}
  \label{fig:figure9}
\end{figure}
\begin{lemma}
Protocol 4(see fig.~\ref{fig:figure9}) distills for two copies of $P_{\delta}^{45}$
\end{lemma}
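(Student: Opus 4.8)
The plan is to mirror the computation used for Protocols~1, 2 and 3: expand the product of two noisy boxes, track how the local wirings of Fig.~\ref{fig:figure9} transform each of the four pure components, reassemble the resulting convex mixture, and finally evaluate it against inequality~\eqref{commonineq}.

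First I would write $P_{\delta}^{45}P_{\delta}^{45} = \delta^{2}P^{45}P^{45} + \delta(1-\delta)\left(P^{45}P^{c}+P^{c}P^{45}\right) + (1-\delta)^{2}P^{c}P^{c}$, exactly as in the proofs of the earlier protocols, so that it suffices to determine the four boxes $P^{45}P^{45}$, $P^{45}P^{c}$, $P^{c}P^{45}$ and $P^{c}P^{c}$ under the specific input wiring $(x_{2},y_{2},z_{2})$ and output recombination read off from Fig.~\ref{fig:figure9}. For each component I would substitute $a_{i}\oplus b_{i}\oplus c_{i}=x_{i}y_{i}\oplus x_{i}z_{i}$ for a $P^{45}$ factor (or $a_{i}\oplus b_{i}\oplus c_{i}=0$ for a $P^{c}$ factor), eliminate the intermediate inputs $x_{2},y_{2},z_{2}$ in terms of $x,y,z$ and the first box's outputs, XOR the two parity constraints together with the chosen recombination of the output bits, and simplify the resulting Boolean polynomial. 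As in the Protocol~2 computation, some components will not collapse to a single perfect box but to a balanced mixture such as $\frac{1}{2}\left(P^{N'}+P^{c}\right)$; those halves then feed into the final convex combination.

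Then I would collect the four contributions into the final box, write it in the form $\eta(\delta)P^{45}+\left(1-\eta(\delta)\right)P^{c}$ together with whatever auxiliary perfect boxes (the analogues of the $P^{xy}$, $P^{xz}$ terms appearing in the Protocol~2 and Protocol~3 computations) show up, apply~\eqref{commonineq} — recalling that every perfect box of classes 44, 45, 46 has value $11$ while $P^{c}$ contributes $7$ — and read off $V^{\prime}$ as a quadratic in $\delta$. Comparing $V^{\prime}$ with the single-box value $V=4\delta+7$ then pins down the region $0<\delta<\delta^{*}$ of distillation, which I would state explicitly at the end exactly as after each previous protocol.

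The main obstacle is the Boolean bookkeeping in the $P^{45}P^{45}$ and $P^{c}P^{45}$ terms: since $f_{45}(x,y,z)=xy\oplus xz$ is not symmetric, the wiring in Fig.~\ref{fig:figure9} has to be precisely the one that makes the cross terms telescope, and one must check that each residual polynomial again lies in the span of the perfect boxes and $P^{c}$ (so that~\eqref{commonineq} applies with the known values) rather than landing in some class whose box value we have not computed. Verifying that cancellation, and correctly distinguishing which components give a genuine mixture from those giving a pure box, is the only delicate part; the remainder is the same mechanical substitution already carried out three times above.
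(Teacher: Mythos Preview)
Your overall strategy matches the paper's proof exactly: expand $P_{\delta}^{45}P_{\delta}^{45}$ into its four pure components, trace each through the wiring of Fig.~\ref{fig:figure9}, reassemble, and evaluate~\eqref{commonineq}. Two of your anticipated details, however, are off and one of them would actually lead you to the wrong $V'$.

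First, Protocol~4 is \emph{non-adaptive}: the second box receives $(x_{2},y_{2},z_{2})=(x,0,z)$, with no dependence on $b_{1}$. Hence none of the four components splits into a $\tfrac{1}{2}(P^{N'}+P^{c})$ mixture; each yields a single pure box. Concretely $P^{45}P^{45}\to P^{xy}$, $P^{45}P^{c}\to P^{45}$, $P^{c}P^{45}\to P^{xz}$, $P^{c}P^{c}\to P^{c}$.

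Second, and more importantly, your plan to ``apply~\eqref{commonineq} \ldots\ recalling that every perfect box of classes 44, 45, 46 has value $11$'' does not go through, because the auxiliary boxes $P^{xy}$ and $P^{xz}$ are \emph{not} in classes 44, 45 or 46. You must evaluate~\eqref{commonineq} on them directly (all one- and two-party correlators vanish, and $\langle A_{x}B_{y}C_{z}\rangle=(-1)^{f(x,y,z)}$); doing so gives value $13$, not $11$, for each of $P^{xy}$ and $P^{xz}$. If you plugged in $11$ you would obtain $V'=-4\delta^{2}+8\delta+7$, i.e.\ exactly Protocol~1, whereas the correct answer is $V'=-4\delta^{2}+10\delta+7$. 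Correspondingly the distillation region is the full interval $0<\delta\le 1$, not a proper sub-interval $0<\delta<\delta^{*}$.
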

\begin{proof}
We start with two copies of $P_{\delta}^{45}$:
\begin{flalign*}
\scalemath{0.9}{P_{\delta}^{45}P_{\delta}^{45}}&\scalemath{0.9}{=\delta^{2}P^{45}P^{45}+\delta(1-\delta)\left(P^{45}P^{c}+P^{c}P^{45}\right)+(1-\delta)^{2}P^{c}P^{c}}
\end{flalign*}
$P^{45}P^{45}\rightarrow$ For box 1 we have $a_{1}\oplus b_{1}\oplus c_{1}=x_{1}y_{1}\oplus x_{1}z_{1}=xy\oplus xz$, $0=xy\oplus xz\oplus a_{1}\oplus b_{1}\oplus c_{1}$, for box 2 we have $a_{2}\oplus b_{2}\oplus c_{2}=x_{2}y_{2}\oplus x_{2}z_{2}=0\oplus x_{1}z_{1}=xz$, $0=xz\oplus a_{2}\oplus b_{2}\oplus c_{2}$, equating we get: $1\oplus a_{1}\oplus a_{2}\oplus b_{1}\oplus b_{2}\oplus1\oplus c_{1}\oplus c_{2}=xy\oplus xz\oplus xz$, $a\oplus b\oplus c=xy$, hence $P^{xy}$\\\\
$P^{45}P^{c}\rightarrow$ For box 1 we have $a_{1}\oplus b_{1}\oplus c_{1}=x_{1}y_{1}\oplus x_{1}z_{1}=xy\oplus xz$, $0=xy\oplus xz\oplus a_{1}\oplus b_{1}\oplus c_{1}$, for box 2 we have $a_{2}\oplus b_{2}\oplus c_{2}=0$, equating we get: $1\oplus a_{1}\oplus a_{2}\oplus b_{1}\oplus b_{2}\oplus1\oplus c_{1}\oplus c_{2}=xy\oplus xz$, $a\oplus b\oplus c=xy\oplus xz$, hence $P^{45}$\\\\
$P^{c}P^{45}\rightarrow$ For box 1 we have $a_{1}\oplus b_{1}\oplus c_{1}=0$, for box 2 we have $a_{2}\oplus b_{2}\oplus c_{2}=x_{2}y_{2}\oplus x_{2}z_{2}=0\oplus x_{1}z_{1}=xz$, $0=xz\oplus a_{2}\oplus b_{2}\oplus c_{2}$ equating we get: $1\oplus a_{1}\oplus a_{2}\oplus b_{1}\oplus b_{2}\oplus1\oplus c_{1}\oplus c_{2}=xz$, $a\oplus b\oplus c=xz$, hence $P^{xz}$\\\\
$P^{c}P^{c}\rightarrow$ for box 1 we have $a_{1}\oplus b_{1}\oplus c_{1}=0$, for box 2 we have $a_{2}\oplus b_{2}\oplus c_{2}=0$, equating $a_{1}\oplus b_{1}\oplus c_{1}=a_{2}\oplus b_{2}\oplus c_{2}$, $1\oplus a_{1}\oplus a_{2}\oplus b_{1}\oplus b_{2}\oplus1\oplus c_{1}\oplus c_{2}=0$, $a\oplus b\oplus c=0$, hence $P^{c}$\\\\
The final box is given by:
\begin{flalign*}
P_{\delta}^{45}P_{\delta}^{45}&=\delta\left(\delta P^{xy}+(1-\delta)P^{xz}\right)+(1-\delta)P_{\delta}^{45}
\end{flalign*}
Applying \eqref{commonineq} to the final box we get:
\begin{flalign*}
&&V^{\prime}=-4\delta^{2}+10\delta+7 &&\qed
\end{flalign*}
\end{proof}
The region of distillation for protocol 4 is $0<\delta\leqslant1$.

\subsubsection{Protocol 5}
\begin{figure}[t]
  \centering
  \begin{tikzpicture}
    \draw (-4.2,-3.0) rectangle (4.2,3.0);
       \node[inner sep=0pt] (russell) at (0,0)
    {\includegraphics[width=0.45\textwidth]{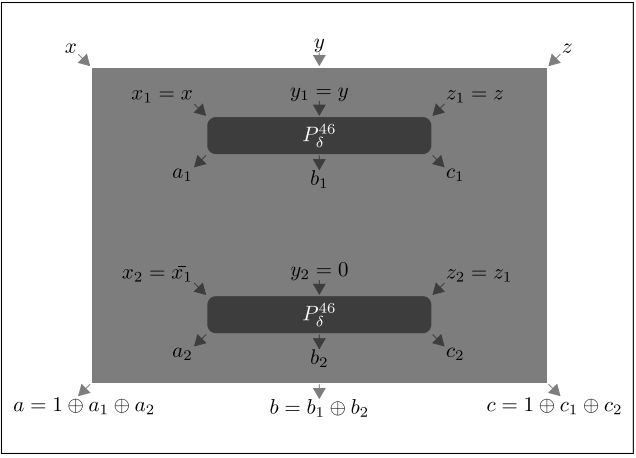}}; 
     \end{tikzpicture}
    \caption{Protocol 5}
  \label{fig:figure10}
\end{figure}
\begin{lemma}
Protocol 5(see fig.~\ref{fig:figure10}) distills for two copies of $P_{\delta}^{46}$
\end{lemma}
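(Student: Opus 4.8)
The plan is to mirror the computations already carried out for Protocols~1--4. First I would expand the tensor square of the noisy box exactly as before, writing
\begin{flalign*}
\scalemath{0.9}{P_{\delta}^{46}P_{\delta}^{46}}&\scalemath{0.9}{=\delta^{2}P^{46}P^{46}+\delta(1-\delta)\left(P^{46}P^{c}+P^{c}P^{46}\right)+(1-\delta)^{2}P^{c}P^{c}}
\end{flalign*}
so that it suffices to determine the effective box produced by Protocol~5's wiring on each of the four summands.

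Next, for each of the four combinations I would substitute the Protocol~5 pre-processing maps $x_{2}=f_{2}(x_{1},a_{1})$, $y_{2}=g_{2}(y_{1},b_{1})$, $z_{2}=h_{2}(z_{1},c_{1})$ read off Fig.~\ref{fig:figure10} into the defining relation $a_{2}\oplus b_{2}\oplus c_{2}=x_{2}y_{2}\oplus y_{2}z_{2}\oplus x_{2}z_{2}$ of the $P^{46}$ factor (or $a_{2}\oplus b_{2}\oplus c_{2}=0$ for the $P^{c}$ factor), combine it with box~1's relation, and reduce the output parity $a\oplus b\oplus c$ — which, through the post-processing $f_{3},g_{3},h_{3}$, is the parity $1\oplus a_{1}\oplus a_{2}\oplus b_{1}\oplus b_{2}\oplus1\oplus c_{1}\oplus c_{2}$ just as in the other proofs — to a Boolean polynomial in $x,y,z$ (and possibly an intermediate output bit such as $b_{1}$). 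Each such polynomial is then recognized as one of $P^{46}$, $P^{c}$, a lower-degree auxiliary box such as $P^{xz}$ or $P^{xy\oplus yz}$, or a two-term equal mixture of two of these, exactly the phenomenon already seen in the Class~46 case of Protocol~2 where $P^{46}P^{46}\to\frac{1}{2}(P^{xy\oplus yz}+P^{c})$. Collecting the four contributions with weights $\delta^{2},\delta(1-\delta),\delta(1-\delta),(1-\delta)^{2}$ yields the final box as an explicit convex combination of $P^{46}$, $P^{c}$ and the auxiliary boxes.

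Finally I would evaluate the class-41 functional \eqref{commonineq} on the final box. Since that functional takes the value $11$ on every perfect box $P^{N}$ and $7$ on $P^{c}$ — and a value I would compute once on each auxiliary box $P^{xz}$, $P^{xy\oplus yz}$ — linearity turns $V^{\prime}$ into an explicit quadratic of the form $V^{\prime}=\alpha\delta^{2}+\beta\delta+7$; comparing with the single-box value $V=4\delta+7$ reduces the distillation claim to the sign of $\alpha\delta^{2}+(\beta-4)\delta$ on $(0,1]$, a one-line inequality giving the region of distillation.

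I expect the main obstacle to be the bookkeeping in the $P^{46}P^{46}$ term: because the symmetric quadratic $xy\oplus yz\oplus xz$ is not disentangled by a single-variable substitution the way $xyz$ is for class~44, the modified inputs can leave a residual term multiplied by an intermediate output bit, forcing that summand to split into a mixture and forcing one to track which auxiliary box ($P^{xz}$ versus $P^{xy\oplus yz}$, etc.) actually appears and with what weight. Getting those coefficients and the auxiliary-box functional values right is the only delicate part; everything else is the routine substitution-and-linear-functional argument of the preceding lemmas.
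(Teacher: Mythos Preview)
Your approach is exactly the paper's: expand $P_\delta^{46}P_\delta^{46}$ into four summands, push each through the Protocol~5 wiring to identify the resulting Boolean polynomial, recombine, and evaluate~\eqref{commonineq}. One correction to your anticipated difficulty: Protocol~5 is \emph{non-adaptive} (the second-box inputs read off Fig.~\ref{fig:figure10} are $x_2=1\oplus x$, $y_2=0$, $z_2=z$, with no dependence on $a_1,b_1,c_1$), so no intermediate output bit survives and no mixtures appear --- the four summands go cleanly to $P^{xy\oplus yz\oplus z}$, $P^{46}$, $P^{z\oplus xz}$, $P^{c}$, giving $V'=8\delta^{2}-2\delta+7$ and a distillation region $\tfrac{3}{4}<\delta\le 1$ rather than all of $(0,1]$.
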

\begin{proof}
We start with two copies of $P_{\delta}^{46}$:
\begin{flalign*}
\scalemath{0.9}{P_{\delta}^{46}P_{\delta}^{46}}&\scalemath{0.9}{=\delta^{2}P^{46}P^{46}+\delta(1-\delta)\left(P^{46}P^{c}+P^{c}P^{46}\right)+(1-\delta)^{2}P^{c}P^{c}}
\end{flalign*}
$P^{46}P^{46}\rightarrow$ For box 1 we have $a_{1}\oplus b_{1}\oplus c_{1}=x_{1}y_{1}\oplus y_{1}z_{1}\oplus x_{1}z_{1}=xy\oplus yz\oplus xz$,
$0=xy\oplus yz\oplus xz\oplus a_{1}\oplus b_{1}\oplus c_{1}$, for box 2 we have $a_{2}\oplus b_{2}\oplus c_{2}=x_{2}y_{2}\oplus y_{2}z_{2}\oplus x_{2}z_{2}=0\oplus0\oplus(1\oplus x_{1})z_{1}=(1\oplus x)z=z\oplus xz$, $0=z\oplus xz\oplus a_{2}\oplus b_{2}\oplus c_{2}$, equating we get: $1\oplus a_{1}\oplus a_{2}\oplus b_{1}\oplus b_{2}\oplus1\oplus c_{1}\oplus c_{2}=xy\oplus yz\oplus xz\oplus z\oplus xz$, $a\oplus b\oplus c=xy\oplus yz\oplus z$, $a\oplus b\oplus c=xy\oplus yz\oplus z$, hence $P^{xy\oplus yz\oplus z}$\\\\
$P^{46}P^{c}\rightarrow$ For box 1 we have $a_{1}\oplus b_{1}\oplus c_{1}=x_{1}y_{1}\oplus y_{1}z_{1}\oplus x_{1}z_{1}=xy\oplus yz\oplus xz$,$0=xy\oplus yz\oplus xz\oplus a_{1}\oplus b_{1}\oplus c_{1}$, for box 2 we have $a_{2}\oplus b_{2}\oplus c_{2}=0$, equating we get: $1\oplus a_{1}\oplus a_{2}\oplus b_{1}\oplus b_{2}\oplus1\oplus c_{1}\oplus c_{2}=xy\oplus yz\oplus xz$, $a\oplus b\oplus c=xy\oplus yz\oplus xz$, hence $P^{46}$\\\\
$P^{c}P^{46}\rightarrow$ For box 1 we have $a_{1}\oplus b_{1}\oplus c_{1}=0$, for box 2 we have $a_{2}\oplus b_{2}\oplus c_{2}=x_{2}y_{2}\oplus y_{2}z_{2}\oplus x_{2}z_{2}=0\oplus0\oplus(1\oplus x_{1})z_{1}=(1\oplus x)z=z\oplus xz$, $0=z\oplus xz\oplus a_{2}\oplus b_{2}\oplus c_{2}$ equating we get:  $1\oplus a_{1}\oplus a_{2}\oplus b_{1}\oplus b_{2}\oplus1\oplus c_{1}\oplus c_{2}=z\oplus xz$, $a\oplus b\oplus c=z\oplus xz$, hence $P^{z\oplus xz}$\\\\
$P^{c}P^{c}\rightarrow$ for box 1 we have $a_{1}\oplus b_{1}\oplus c_{1}=0$, for box 2 we have $a_{2}\oplus b_{2}\oplus c_{2}=0$, equating we get:  $1\oplus a_{1}\oplus a_{2}\oplus b_{1}\oplus b_{2}\oplus1\oplus c_{1}\oplus c_{2}=0$, $a\oplus b\oplus c=0$, hence $P^{c}$\\\\
The final box is given by:
\begin{flalign*}
P_{\delta}^{46}P_{\delta}^{46}&=\delta\left(\delta P^{xy\oplus yz\oplus z}+(1-\delta)P^{z\oplus xz}\right)+(1-\delta)P_{\delta}^{46}
\end{flalign*}
Applying \eqref{commonineq} to the final box we get:
\begin{flalign*}
&&V^{\prime}=8\delta^{2}-2\delta+7 &&\qed
\end{flalign*}
\end{proof}
The region of distillation for protocol 5 is $\dfrac{3}{4}<\delta\leqslant1$.\\\\

From fig.~\ref{fig:figure11} we can conclude that protocol 2 as compared to protocol 1 performs much better for class 45, for class 46 protocol 2 performs slightly better than protocol 1, however for class 44 protocol 1 performs better than protocol 2, each of the unique protocols represented by black curves in fig.~\ref{fig:figure11} perform better than common protocols for their respective classes and distillation regions, among unique protocols protocol 4 for class 45 performs the best.

\begin{widetext}
\begin{figure}[t]
  \centering
  \begin{tikzpicture}
    \draw (-8.5,-5.5) rectangle (8.5,5.5);
     \node[inner sep=0pt] (russell) at (0,0)
    {\includegraphics[width=0.9\textwidth]{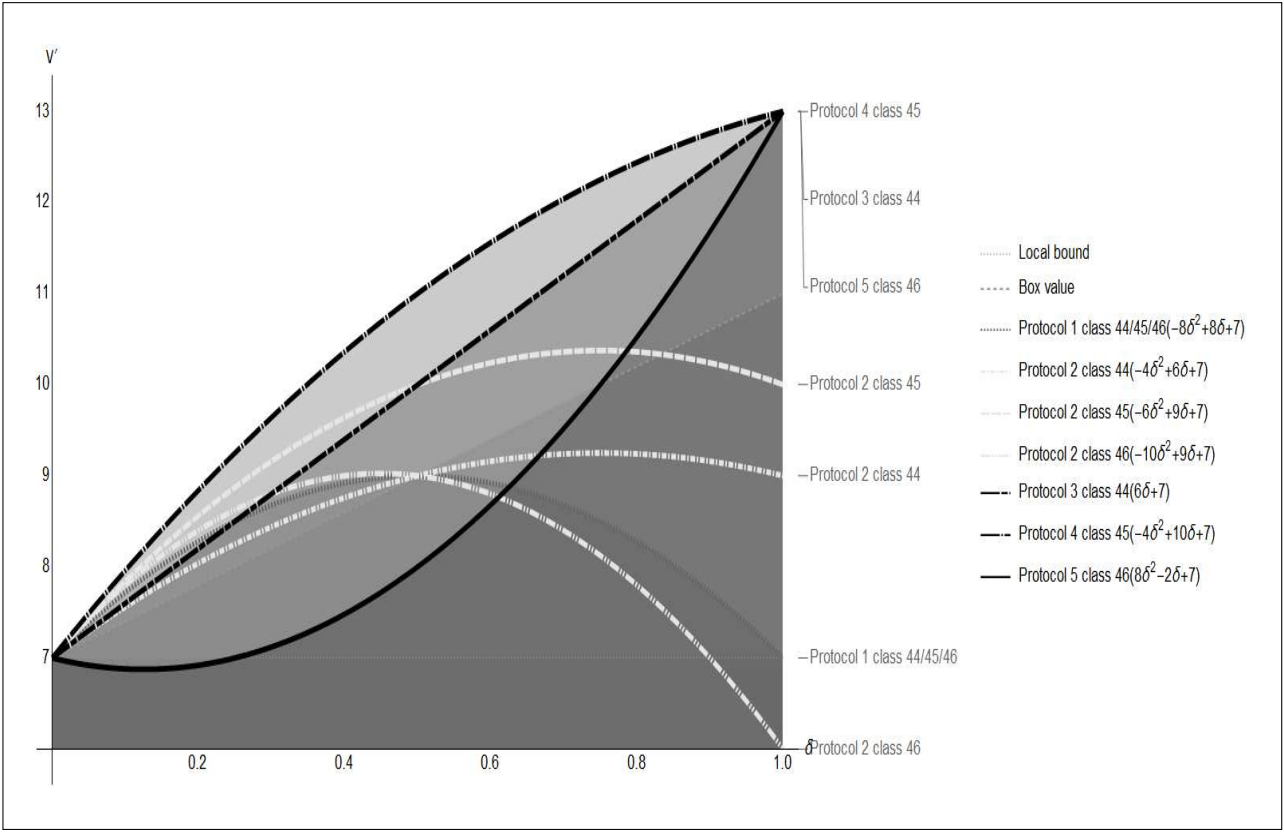}}; 
      \end{tikzpicture}
   \caption{Plot for $P_{\delta}^{N}$ distillation protocols}
  \label{fig:figure11}
\end{figure}

\section{Final comments:\protect}
We presented here a limited study of distillation in tripartite NLBs, in comparison to the bipartite($(2,2,2)$) case the tripartite($(3,2,2)$) case is much more complicated with 53856 extremal points classified into 46 classes\cite{1751-8121-44-6-065303},  because of this  complexity we selected a limited part of tripartite no-signalling polytope specifically a representative of each class 44,45,46 and GHZ box, the reason for this selection is based on a common criteria that each of these genuine tripartite boxes in their definition have no one and two party expectations. As is the case with tripartite extremal points the distillation is also complicated and does lead to numerous distillation protocols, for GHZ box we selected a pair of class 2 inequalities while for $P_{\delta}^{N}$ we selected a class 41 inequality out of 53856 inequalities\cite{quant-ph/0305190}, this inequality has a limited no-signalling violation for $P^{N}$ which provides a common box value for $P^{N}$, while GHZ box is distilled with only single protocol, $P_{\delta}^{N}$ has much more distillation protocols, for $P_{\delta}^{N}$ we selected two common protocols, protocol 2 is selected for large no-signalling violation of $P_{\delta}^{45}$, with the same selection criteria we selected unique protocols, this has led to the conjecture that common protocols cannot outperform unique protocols, clearly in our setting as far as no-signalling violation is  concerned class 45 has outperformed other classes. In future we intend to study distillation in other classes however such a task will require a suitable representation of boxes in terms of boolean polynomials which while is not impossible but still is much more complicated, of course such a exercise will lead to numerous protocols, which perhaps can be classified according to some criteria.

\begin{acknowledgments}
\begin{itemize}
\item I am thankful to Jibran Rashid for his initial guidance.
\item I thank Cezary $\acute{\textnormal{S}}$liwa for his discussion on bell inequalities.
\end{itemize}
\end{acknowledgments}

\end{widetext}

\bibliography{tripartite_pre}

\end{document}